\documentclass[letterpaper,11pt]{article}

\usepackage{color}
\usepackage{latexsym}

\usepackage{verbatim}

\usepackage{amsmath}
\usepackage{amssymb}

\usepackage{amsfonts}
\usepackage{graphicx}



\usepackage{ifpdf}
\ifpdf
\DeclareGraphicsRule{*}{mps}{*}{}
\else
\DeclareGraphicsRule{*}{eps}{*}{}
\fi

\setlength{\textwidth}{6.5in}
\setlength{\evensidemargin}{-0.1in}
\setlength{\oddsidemargin}{-0.1in}
\setlength{\topmargin}{-0.6in}
\setlength{\textheight}{9.3in}

\setlength{\parindent}{0pt}
\setlength{\parskip}{3pt plus 2pt}

\newtheorem{theorem}{Theorem}[section]
\newtheorem{lemma}[theorem]{Lemma}

\newtheorem{corollary}[theorem]{Corollary}

\newenvironment{proof}{{\bf Proof:\ }}{\hfill$\Box$\medskip}

\newcommand{\ignore}[1]{}
\newcommand{\remove}[1]{}

\newcommand{\etal}{{et al.\ }}

\newcommand{\naive}{\mbox{\it na\"{i}ve \rm 1p}}
\newcommand{\lazy}{\mbox{\rm 2p}}
\newcommand{\eager}{\mbox{\rm 1p}}

\newcommand{\WHILE}{{\tt while}}

\newcommand{\FOR}{{\tt for}}

\newcommand{\IF}{{\tt if}}
\newcommand{\ELSE}{{\tt else}}
\newcommand{\RETURN}{{\tt return}}

\newcommand{\link}{\mbox{{\it link}}}

\newcommand{\meld}{\mbox{{\it meld}}}
\newcommand{\hinsert}{\mbox{{\it insert}}}
\newcommand{\hdelete}{\mbox{{\it delete}}}
\newcommand{\deletemin}{\mbox{{\it delete}-{\it min}}}
\newcommand{\makeheap}{\mbox{{\it make}-{\it heap}}}

\newcommand{\decreasekey}{\mbox{{\it decrease}-{\it key}}}

\newcommand{\addchild}{\mbox{{\it add}-{\it child}}}

\newcommand{\findmin}{\mbox{{\it find}-{\it min}}}

\newcommand{\makenode}{\mbox{{\it make}-{\it node}}}
\newcommand{\dorankedlinks}{\mbox{{\it do}-{\it ranked}-{\it links}}}
\newcommand{\dounrankedlinks}{\mbox{{\it do}-{\it unranked}-{\it links}}}
\newcommand{\newnode}{\mbox{{\it new}-{\it node}}}

\newcommand{\NULL}{\mbox{\it null}}

\newcommand{\rank}{\mbox{{\it rank}}}

\newcommand{\maxrank}{\mbox{{\it max\/}-{\it rank}}}
\newcommand{\node}{\mbox{{\it node}}}

\newcommand{\ITEM}{\mbox{{\it item}}}
\newcommand{\key}{\mbox{{\it key}}}

\newcommand{\next}{\mbox{{\it next}}}

\newcommand{\child}{\mbox{{\it child}}}

\newenvironment{mytabbing}
  {\setlength{\topsep}{0pt}%
   \setlength{\partopsep}{5pt}%
   \tabbing}
  {\endtabbing}


\begin{document}

\title{Hollow Heaps \thanks{A preliminary version of the paper appeared in ICALP 2015.}}

\author{
Thomas Dueholm Hansen\thanks{Department of Computer Science,
Aarhus University, Denmark. Supported by The Danish Council for
Independent Research $|$ Natural Sciences (grant no. 12-126512); and the Sino-Danish Center for the Theory of Interactive Computation, funded by the Danish National Research Foundation and the National Science Foundation of China (under the grant 61061130540). E-mail: {\tt tdh@cs.au.dk}. }
\and
Haim Kaplan\thanks{Blavatnik School of
Computer Science, Tel Aviv University,
  Israel. Research supported by
 the
Israel Science Foundation grants no.\ 822-10 and 1841/14,
 the
German-Israeli Foundation for Scientific Research and Development (GIF) grant no.\ 1161/2011, and
the Israeli Centers of Research Excellence (I-CORE) program (Center
No.\ 4/11). E-mail: {\tt haimk@post.tau.ac.il}.}
\and Robert E. Tarjan\thanks{Department of Computer Science, Princeton University,
Princeton, NJ 08540, USA and Intertrust Technologies, Sunnyvale, CA 94085, USA.}
\and Uri Zwick\thanks{Blavatnik School of
Computer Science, Tel Aviv University,
  Israel. Research supported by BSF grant no.\ 2012338 and by
The Israeli Centers of Research Excellence (I-CORE) program (Center
No.\ 4/11). E-mail: {\tt
    zwick@tau.ac.il}.}}


\maketitle

\begin{abstract}\noindent
We introduce the \emph{hollow heap}, a very simple data structure with the same amortized efficiency as the classical Fibonacci heap.  All heap operations except \hdelete\ and \deletemin\ take $O(1)$ time, worst case as well as amortized; \hdelete\ and \deletemin\ take $O(\log n)$ amortized time on a heap of $n$ items. Hollow heaps are by far the simplest structure to achieve this.  Hollow heaps combine two novel ideas: the use of lazy deletion and re-insertion to do \decreasekey\ operations, and the use of a dag (directed acyclic graph) instead of a tree or set of trees to represent a heap.  Lazy deletion produces hollow nodes (nodes without items), giving the data structure its name.
\end{abstract}

{\bf Subject classification:}
68P05 Data structures; 68Q25 Analysis of algorithms

{\bf Keywords:} data structures, priority queues, heaps, amortized analysis

\section{Introduction}\label{sec:intro}

\pagestyle{plain}
\setcounter{page}{1}

A \emph{heap} is a data structure consisting of a set of \emph{items}, each with a \emph{key} selected from a totally ordered universe.  Heaps support the following operations:

\begin{description}
\item[$\makeheap( )$:] Return a new, empty heap.

\item[$\findmin(h):$] Return an item of minimum key in heap~$h$, or \NULL\ if~$h$ is empty.

\item[$\hinsert(e, k, h)$:]
Return a heap formed from heap~$h$ by inserting
item~$e$, with  key~$k$. Item~$e$ must be in no heap.

\item[$\deletemin(h)$:]
Return a heap formed from non-empty heap~$h$  by deleting
the item returned by $\findmin(h)$.

\item[$\meld(h_1, h_2)$:] Return a heap containing all items in item-disjoint heaps $h_1$ and~$h_2$.

\item[$\decreasekey(e, k, h)$:] Given that~$e$ is an item in heap~$h$ with key greater than~$k$,
return a heap formed from~$h$ by changing the key of~$e$ to~$k$.

\item[$\hdelete(e, h):$] 
Return a heap formed by deleting~$e$, assumed to be in~$h$, from~$h$.
\end{description}

The original heap~$h$ passed to $\hinsert$, $\deletemin$, $\decreasekey$, and $\hdelete$, and the heaps~$h_1$ and~$h_2$ passed to $\meld$, are destroyed by the operations.
Heaps do \emph{not} support search by key; operations \decreasekey\ and \hdelete\ are given the location of item~$e$ in heap~$h$. The parameter~$h$ can be omitted from \decreasekey\ and \hdelete, but then
to make \decreasekey\ operations efficient if there are intermixed \meld\ operations, a separate disjoint set data structure is needed to keep track of the partition of items into heaps.
(See the discussion in~\cite{KaShTa02b}.)

Fredman and Tarjan \cite{FrTa87} invented the \emph{Fibonacci heap}, an implementation of heaps that supports \deletemin\ and \hdelete\ on an $n$-item heap in $O(\log n)$ amortized time and each of the other operations in $O(1)$ amortized time.  Applications of Fibonacci heaps include a fast implementation of Dijkstra's shortest path algorithm \cite{Di59,FrTa87} and fast algorithms for undirected and directed minimum spanning trees \cite{Edmonds67,GaGaSpTa86}.  Since the invention of Fibonacci heaps, a number of other heap implementations with the same amortized time bounds have been proposed \cite{Brodal96,BrLaTa12,Chan13,Elmasry10,HaSeTa11,Hoyer95,KaTa08,Peterson87,Takaoka03}.  Notably, Brodal \cite{Brodal96} invented a very complicated heap implementation that achieves the time bounds of Fibonacci heaps in the worst case.  Brodal \etal \cite{BrLaTa12} later simplified this data structure, but it is still significantly more complicated than any of the amortized-efficient structures.
For further discussion of these and related results, see \cite{HaSeTa11}. We focus here on the \emph{amortized} efficiency of heaps.

In spite of its many competitors, Fibonacci heaps remain one of the simplest heap implementations to describe and code, and are taught in numerous undergraduate and graduate data structures courses.
We present {\em hollow heaps}, a data structure that we believe surpasses Fibonacci heaps in its simplicity.
Our {final} data structure has two novelties: it uses lazy deletion to do \decreasekey\ operations in a simple and natural way, avoiding the \emph{cascading cut} process used by Fibonacci heaps, and it represents a heap by a dag (directed acyclic graph) instead of a tree or a set of trees. The amortized analysis of hollow heaps is simple, yet non-trivial. We believe that simplifying fundamental data structures, while retaining their performance, is an important endeavor.

In a Fibonacci heap, a \decreasekey\ produces a heap-order violation if the new key is less than that of the parent node.  This causes a \emph{cut} of the violating node and its subtree from its parent.  Such cuts can eventually destroy the ``balance" of the data structure.  To maintain balance, each such cut may trigger a cascade of cuts at ancestors of the originally cut node.  The cutting process results in loss of information about the outcomes of previous comparisons. It also makes the worst-case time of a \decreasekey\ operation $\Theta(n)$ (although  modifying the data structure reduces this to $\Theta(\log n)$; see e.g., \cite{KTZ14}).  In a hollow heap, the item whose key decreases is merely moved to a new node, preserving the existing structure.  Doing such lazy deletions carefully is what makes hollow heaps simple but efficient.


Starting from the ideas used in Fibonacci heaps, we develop hollow heaps in three steps.  First, we show how \emph{hollow nodes}, with an appropriate way of moving children and setting ranks, can replace the cascading cut process that makes Fibonacci heaps efficient.  Second, we replace the set of trees representing a heap by a single tree, by allowing \emph{unranked links} (links between roots of different ranks).  This idea was used before in \cite{HaSeTa11,KTZ14} and is orthogonal to the idea of using hollow nodes.  Third, we obtain our final data structure by showing how to avoid moving children of hollow nodes at all.  To do this, we represent a heap by a (tree-like) dag rather than a tree: each node can have up to two parents, rather than just one.

The remainder of our paper consists of 6 sections.
Section~\ref{sec:multi-root}  presents a multi-root version of hollow heaps, the first of the three steps mentioned above.  Section~\ref{sec:one-root}  presents a one-root version of hollow heaps.  Section~\ref{sec:lazy}  presents the final version of our data structure, which replaces the movement of children by the use of a dag representation.
Section~\ref{sec:rebuilding} describes a rebuilding process that can be used to improve the time and space efficiency of hollow heaps.
Section~\ref{sec:lazy-implementation}  gives implementation details for the data structure in Section~\ref{sec:lazy}.
Section~\ref{sec:good-bad} explores the design space of hollow heaps, identifying variants that are efficient and variants that are not.
%

\section{Multi-root hollow heaps}\label{sec:multi-root}

As in the classic Fibonacci heap data structure, we represent each heap by a set of heap-ordered trees.  In a Fibonacci heap, each node stores exactly one item, and each item is in exactly one node.  We relax this invariant to allow nodes that do not hold items.  We call a node \emph{full} if it holds an item and \emph{hollow} if not.  A newly created node is initially full, but can later become hollow by having its item moved to a new node or deleted.  A hollow node stays hollow until it is destroyed.  Since items are moved among nodes, the structure is necessarily \emph{exogenous} rather than \emph{endogenous} \cite{Tarjan83}: nodes \emph{hold} items rather than \emph{being} items.

If~$u$ is a node, we denote by $u.item$ the item held by~$u$ if~$u$ is full, or $null$ if~$u$ is hollow.  If~$e$ is an item, we denote by $e.node$ the node holding~$e$.  Each node~$u$ has a key $u.key$ associated with it: \footnote{Alternatively, one can store keys with items rather than with nodes.  Which alternative is preferable is primarily an experimental question.} if~$u$ is full, $u.key$ is the current key of $u.item$; if~$u$ is hollow, $u.key$ is the key of the item once held by~$u$, just before it was moved from~$u$.


We organize nodes into rooted trees.  If $(v, w)$ is a tree arc, we say that~$v$ is the \emph{parent} of~$w$ and~$w$ is a \emph{child} of~$v$.  (We direct tree arcs from parent to child.)  If there is a path of zero or more arcs from~$v$ to~$w$ in a tree, we say that~$v$ is an \emph{ancestor} of~$w$ and~$w$ is a \emph{descendant} of~$v$.  A node with no parent is a \emph{root}; a node with no children is a \emph{leaf}.  A tree is \emph{heap-ordered} if and only if for every arc $(v, w)$, $v.key \le w.key$ (whether or not~$v$ and~$w$ are hollow). Heap-order implies that the root has a minimum key.

We maintain the invariant that the set of trees representing a heap is either empty or contains a full root whose key is minimum among those of all nodes in the trees, full and hollow.  We maintain a pointer to such a root.  We call the node indicated by this pointer the \emph{minimum node} of the heap.

A generic way of implementing the heap update operations is via the \link\ primitive.  Given two full roots~$u$ and~$v$, $\link(u, v)$ compares their keys and makes the root of larger key a child of the other, breaking a tie arbitrarily. 
The new child is the \emph{loser} of the link; its new parent is the \emph{winner}.  Linking eliminates one full root, preserves heap order, and gives the loser a parent.  As in Fibonacci heaps, we do links only during \deletemin\ or \hdelete\ operations that delete the item in the minimum node.

To do \makeheap, return an empty forest.  To do \findmin, return the item in the minimum node.  To \meld\ two heaps, if one is empty return the other; otherwise, unite their sets of trees and update the minimum node.
To \hinsert\ an item into a heap, create a new node, store the item in it (making the node full), and \meld\ the resulting one-node heap with the existing heap.

The \decreasekey\ operation uses a lazy form of deletion.  To decrease the key of item~$e$ in heap $h$ to $k$, let $u = e.node$ and do the following: create a new node~$v$; move~$e$ from~$u$ to~$v$ (making~$v$ full and~$u$ hollow); set $v.key = k$; move some or all of the children of~$u$, and their subtrees, to~$v$; \meld\ the one-root heap consisting of~$v$ and its descendants with the existing heap.  The choice of which children to move is a critical design decision that we address shortly.  As an optimization, if~$u$ is a root one can avoid creating a new node by merely setting $u.key = k$ and updating the minimum node.

To do \deletemin, delete the item in the minimum node.
To delete an item~$e$, remove~$e$ from the node $u = e.node$ holding it, making~$u$ hollow.
 This completes the deletion unless~$u$ is the minimum node.
After deleting the item in the minimum node during either a \deletemin\ or \hdelete,
 proceed as follows: while there is a hollow root, destroy such a root, making each of its children into a root (unite the set of such new roots with the set of old roots).  Once all roots are full, do zero or more links to reduce the number of roots.
Make each remaining tree into a heap, and meld these heaps in any order.  The choice of which links to do is another critical design decision that we address shortly.


{\bf Remark:}
This implementation of \hdelete\ allows hollow roots. One can keep all roots full by proceeding as in the case of deletion of the item in the
minimum node whenever an item in \emph{any} root is deleted.

As in Fibonacci heaps, we make the data structure efficient by using \emph{node ranks}.
\footnote{Fredman \cite{Fredman99} has shown that obtaining a constant amortized bound for \decreasekey\ in a data structure like ours
\emph{requires} storing $\Omega(n\log\log n)$ extra bits of information, such as node ranks, although his result has significant technical restrictions. See also
Iacono and {\"{O}}zkan \cite{IaconoO14}.}
We give each node~$u$ a non-negative integer rank $u.rank$.  When a node is created by an insertion, its initial rank is $0$.  We do links only between roots of equal rank.  Such a link increases the rank of the winner by one.  We call such a link a \emph{ranked link}.  A \hdelete\ that deletes the item in the minimum node does ranked links until none are possible (all roots have different ranks).

The remaining design decision is the choice of which children (and their subtrees) to move during a \decreasekey\ operation (and what rank to give the newly created node).  We achieve efficiency by maintaining the following \emph{rank invariant}:

\begin{quote}
A node~$u$ of rank $r$ has exactly $r$ children, of ranks $0, 1,\ldots, r - 1$, \emph{unless} $r > 2$ and~$u$ was made hollow by a \decreasekey, in which case~$u$ has exactly two children, of ranks $r - 2$ and $r - 1$.
\end{quote}

To maintain the rank invariant, when doing a \decreasekey\ operation that moves an item from node~$u$ to node~$v$, initialize $v.rank = \max\{0, u.rank - 2\}$, and move to~$v$ each child of~$u$ of rank less than $v.rank$ (along with their subtrees).  If $u.rank \ge 2$, $u$ retains its children of ranks $u.rank - 2$ and $u.rank - 1$, along with their subtrees; if $u.rank = 1$, $u$ retains its only child (of rank $0$).

The resulting data structure is the \emph{multi-root hollow heap}.
Figure~\ref{fig:multi-root} illustrates operations on a multi-root hollow heap.

\begin{figure}[t]
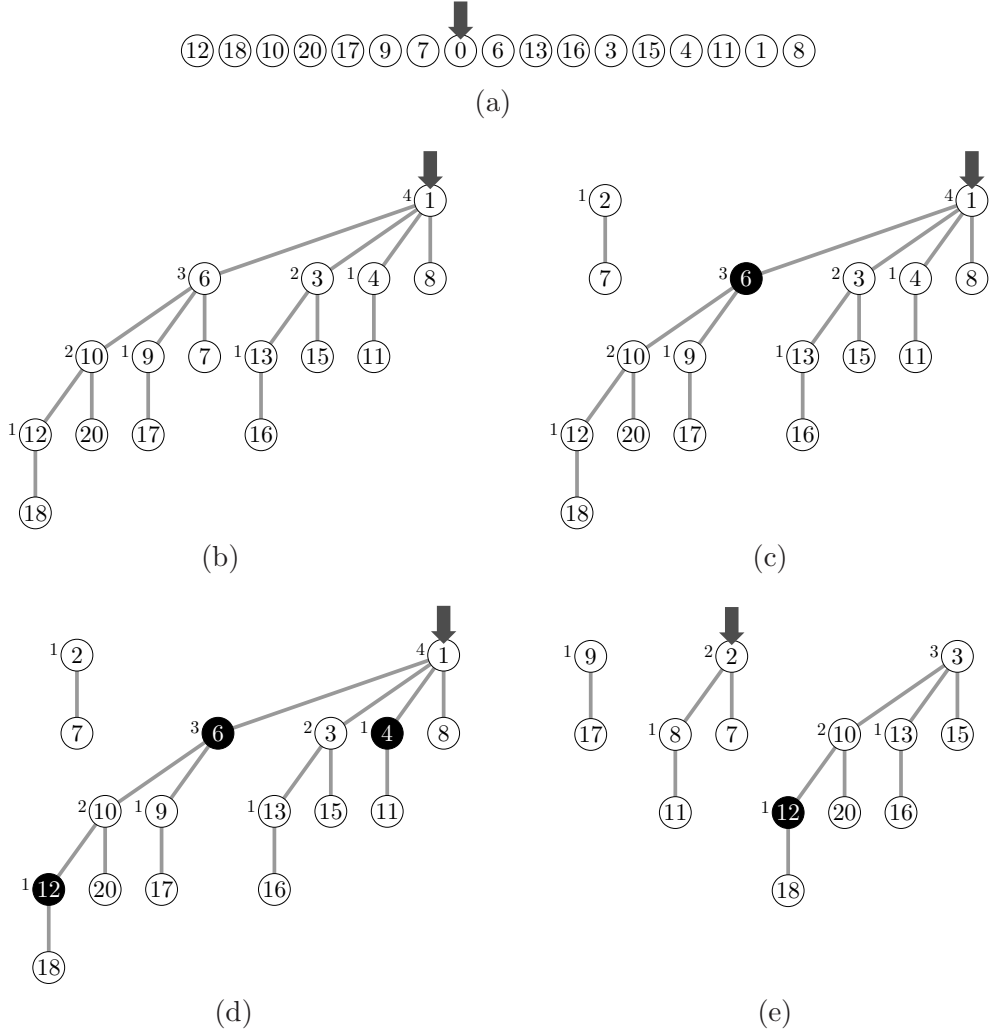

\begin{center}
\renewcommand{\tabcolsep}{0.2cm}
\begin{tabular}{c}
\includegraphics[scale=0.5]{figures2.17}\vspace{0.1cm} \\
(a) \vspace{0.35cm}
\end{tabular}
\begin{tabular}{cc}
\includegraphics[scale=0.5]{figures2.18} \hspace*{10pt}&\hspace*{10pt}
\includegraphics[scale=0.5]{figures2.19} \\
(b) \hspace*{15pt} & \hspace*{15pt} (c) \vspace{0.35cm}
\end{tabular}
\begin{tabular}{cc}
\includegraphics[scale=0.5]{figures2.20}  \hspace*{10pt}& \hspace*{10pt}
\includegraphics[scale=0.5]{figures2.21}  \\
(d)  \hspace*{15pt}&\hspace*{15pt} (e)
\end{tabular}
\end{center}
\caption{
Operations on a multi-root hollow heap.  Numbers in nodes are keys; black nodes are hollow. Numbers next to nodes are non-zero ranks. (a) The heap after the successive insertion of items with keys, 12, 18, 20, 20,\ldots. Each item forms a singleton tree. The arrow pointing to the item with key 0 is the minimum pointer.
(b) The heap after the deletion of the the minimum item, i.e., the item with key 0. (c) The heap after decreasing key 6 to 2. (d) The heap after deleting the items with keys 4 and 12. (e) The heap after a \deletemin\ operation that deletes the item with key~1.
}\label{fig:multi-root}
\end{figure}

\begin{theorem}\label{thm:multi-correct}
Multi-root hollow heaps correctly implement all the heap operations, maintain heap-ordered trees, and maintain the rank invariant.
\end{theorem}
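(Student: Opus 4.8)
The plan is to prove the three assertions --- correctness of the operations, maintenance of heap order, and maintenance of the rank invariant --- essentially simultaneously, by induction on the sequence of heap operations performed. The inductive hypothesis is that before each operation, every heap is represented by a set of heap-ordered trees satisfying the rank invariant, and that the minimum-node pointer correctly points to a full root of minimum key. I would then verify that each of the seven operations preserves all of these properties.

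First I would dispatch the easy operations. For \makeheap, \findmin, and \meld, almost nothing needs checking: uniting two sets of heap-ordered trees yields a set of heap-ordered trees, the rank invariant is a local (per-node) property unaffected by melding, and the correctness of the minimum pointer after a \meld\ follows from comparing the two minimum nodes. For \hinsert, the new singleton node is full, has rank~$0$ (so it trivially satisfies the rank invariant, having zero children), and forms a heap-ordered one-node tree; correctness then reduces to the \meld\ case. This lets me concentrate on the three substantive operations: \decreasekey, the deletion of a non-minimum item, and the deletion of the minimum item (which underlies both \deletemin\ and the interesting case of \hdelete).

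The heart of the argument is maintaining the rank invariant, and this is where I expect the real work to be. For \decreasekey\ moving item~$e$ from~$u$ to a new node~$v$, I would set $v.rank = \max\{0, u.rank - 2\}$ and move exactly those children of~$u$ of rank less than $v.rank$. I must check two nodes. For~$v$: it is a new full node with rank $r' = \max\{0,u.rank-2\}$, and by the inductive hypothesis~$u$ had children of all ranks $0,1,\dots,u.rank-1$ (or the two exceptional ranks), so the children moved to~$v$ are precisely those of ranks $0,1,\dots,r'-1$, giving~$v$ exactly $r'$ children of the required ranks --- the ``full'' case of the invariant. For~$u$: it becomes hollow via a \decreasekey, and the retained children are those of rank $\ge v.rank$, namely ranks $u.rank-2$ and $u.rank-1$ when $u.rank\ge 2$ (the exceptional case of the invariant, valid since $r>2$ forces genuine hollowness by \decreasekey), and I would separately check the boundary cases $u.rank\in\{0,1\}$ against the invariant's statement. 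Heap order is preserved because moving a subtree to~$v$ only requires $v.key\le$ the key of each moved child's root, which holds since $v.key=k$ is the new (decreased) key and the old arc $(u,w)$ gave $u.key\le w.key$ with $k\le u.key$ by the precondition of \decreasekey.

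Finally, for deletion of the minimum item I would argue that the \texttt{while}-loop that destroys hollow roots and promotes their children to roots preserves heap order (promoting a child to a root never creates a violating arc, since arcs are only removed) and preserves the rank invariant node-by-node (the invariant is stated per node and is unaffected by a node's loss of its parent). The subsequent ranked links are between roots of equal rank and increase the winner's rank by one: I would check that a ranked link of two rank-$r$ roots, each with children of ranks $0,\dots,r-1$ by the ``full'' case, produces a rank-$(r+1)$ winner whose children now have ranks $0,\dots,r$, exactly as the invariant demands. Heap order is preserved by \link\ since the larger-key root becomes the child. The loop terminates because each iteration removes a hollow node, and on termination all roots are full, so the new minimum pointer can be set to the full root of smallest key; correctness of that pointer follows from heap order, which guarantees the overall minimum key resides at a root. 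The main obstacle throughout is the careful bookkeeping of ranks in the exceptional \decreasekey\ case --- ensuring the ``$r>2$'' side condition and the boundary ranks $0$ and~$1$ are handled so that the invariant is literally satisfied as stated.
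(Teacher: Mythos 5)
Your proposal is correct and takes exactly the route the paper does: the paper's entire proof is the one-line remark that the result follows ``by induction on the number of heap operations,'' and you have simply supplied the case analysis it leaves implicit. The only nit is in your \decreasekey\ case for $u.rank = 2$: the hollowed node then retains its two children of ranks $0$ and $1$, which satisfies the invariant via the main clause (exactly $r$ children of ranks $0,\ldots,r-1$) rather than the exceptional clause, since the latter requires $r>2$ --- but the invariant holds either way, so this is a labeling slip, not a gap.
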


\begin{proof}
The proof is straightforward by induction on the number of heap operations.
\end{proof}

The effect of our design decisions is to keep the trees \emph{balanced} in an appropriate sense, as we show by an argument like that for Fibonacci heaps but simpler and more direct.
Recall the definition of the Fibonacci numbers: $F_0 = 0$, $F_1 = 1$, $F_i = F_{i - 1} + F_{i - 2}$ for $i \ge 2$.  These numbers satisfy $F_{i + 2} \ge \phi^i$, where $\phi = (1 + \sqrt{5})/2$ is the \emph{golden ratio} \cite{Knuth98}.

\begin{lemma}\label{lem:multi-fib}
A node of rank~$r$ has at least $F_{r + 3} - 1$ descendants, both full and hollow.
\end{lemma}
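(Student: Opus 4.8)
The plan is to prove the bound by strong induction on the rank $r$. Write $D(r)$ for the smallest number of descendants (the node itself included, following the paper's convention that a node is a descendant of itself) of any node that ever has rank $r$; the goal is then $D(r) \ge F_{r+3}-1$. The two facts I would lean on are the rank invariant, which fixes the ranks of the children of a rank-$r$ node in any configuration, and Theorem~\ref{thm:multi-correct}, which guarantees that this invariant holds throughout.

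The first step is to distill a recurrence for $D$. Consider a node $u$ of rank $r \ge 2$ in an arbitrary configuration. The rank invariant permits exactly two shapes: either $u$ has one child of each of the ranks $0,1,\dots,r-1$, or, only when $r>2$ and $u$ was hollowed by a \decreasekey, $u$ has just two children, of ranks $r-2$ and $r-1$. In either shape $u$ owns children of ranks $r-2$ and $r-1$, and the second shape merely discards the lower-ranked ones. Because the lemma is a lower bound, the \decreasekey\ shape is the binding case, and counting $u$ together with the subtrees of these two children yields the uniform inequality $D(r) \ge 1 + D(r-2) + D(r-1)$ for every $r \ge 2$ (for $r=2$ the first shape already supplies children of ranks $r-2=0$ and $r-1=1$). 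The base cases are immediate: a rank-$0$ node gives $D(0) \ge 1$, and a rank-$1$ node has a single rank-$0$ child, so $D(1) \ge 2$.

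The second step is to match this recurrence to the Fibonacci numbers. The base cases agree, since $F_3-1 = 1$ and $F_4-1 = 2$. For $r \ge 2$, the induction hypotheses $D(r-2) \ge F_{r+1}-1$ and $D(r-1) \ge F_{r+2}-1$ feed into the recurrence to give $D(r) \ge 1 + (F_{r+1}-1) + (F_{r+2}-1) = F_{r+1} + F_{r+2} - 1 = F_{r+3}-1$, where the last equality is just $F_{r+3} = F_{r+2} + F_{r+1}$. This closes the induction.

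The single point that requires care — the main obstacle — is the justification of the recurrence rather than its arithmetic: I must argue that the \decreasekey\ exception in the rank invariant is genuinely the worst case (it is exactly the exception that turns a would-be exponential count into a Fibonacci one) and that applying the induction hypothesis to the two distinguished children is legitimate. The latter is clean because those children currently have ranks $r-2$ and $r-1$, so by the definition of $D$ their subtrees contain at least $D(r-2)$ and $D(r-1)$ nodes; no appeal to the history of the subtrees is needed, since the rank invariant holds in the present configuration by Theorem~\ref{thm:multi-correct}. Everything past this point is the standard Fibonacci-number estimate, mirroring the classical Fibonacci-heap argument but, as the authors remark, in a more direct form.
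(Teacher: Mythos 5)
Your proof is correct and follows essentially the same route as the paper: induction on the rank, using the rank invariant to guarantee that a rank-$r$ node (full or hollow) always retains children of ranks $r-1$ and $r-2$, which yields the recurrence $1+(F_{r+2}-1)+(F_{r+1}-1)=F_{r+3}-1$. The paper's version is just a terser statement of the same argument.
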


\begin{proof}
The proof is by induction on~$r$. The claim is immediate for $r = 0$ and $r = 1$. If $r \ge 2$, the descendants of a node~$u$ of rank~$r$ include itself and all descendants of its children of ranks $r - 1$ and $r - 2$, which it has by the rank invariant, whether it is full or hollow.  By the induction hypothesis, $u$ has at least $1 + (F_{r + 2} - 1) + (F_{r + 1} - 1) = F_{r + 3} - 1$ descendants.
\end{proof}

\begin{corollary}\label{cor:multi-rank}
The rank of a node in a multi-root hollow heap of~$N$ nodes is at most $\log_\phi N$.
\end{corollary}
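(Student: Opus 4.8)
The plan is to derive the rank bound directly from Lemma~\ref{lem:multi-fib}, which already does the structural heavy lifting. The corollary is essentially an immediate arithmetic consequence: if a node has rank $r$, then by the lemma it has at least $F_{r+3}-1$ descendants (counting both full and hollow nodes), and these descendants all lie in the heap, so their count cannot exceed the total number of nodes $N$. This gives the inequality $F_{r+3}-1 \le N$, and the whole task reduces to turning a bound on $F_{r+3}$ into a bound on $r$.

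First I would write down $F_{r+3}-1 \le N$, hence $F_{r+3} \le N+1$. Then I would invoke the Fibonacci growth estimate quoted just before the lemma, namely $F_{i+2} \ge \phi^i$. Setting $i = r+1$ gives $F_{r+3} \ge \phi^{r+1}$, and combining with the previous inequality yields $\phi^{r+1} \le N+1$. Taking logarithms base $\phi$ gives $r+1 \le \log_\phi(N+1)$, so $r \le \log_\phi(N+1) - 1$.

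The only genuinely fiddly point is reconciling this with the clean statement $r \le \log_\phi N$. One must check that $\log_\phi(N+1) - 1 \le \log_\phi N$, i.e. that $(N+1)/\phi \le N$, equivalently $N+1 \le \phi N$, i.e. $1 \le (\phi - 1)N$. Since $\phi - 1 = 1/\phi \approx 0.618$, this holds for all $N \ge 2$, and the cases $N = 0$ (no nodes, so the claim is vacuous) and $N = 1$ (the single node has rank $0 \le \log_\phi 1 = 0$) can be verified by hand. So the sharper use of the golden-ratio identity $\phi - 1 = 1/\phi$ is what closes the small gap and lets one replace $N+1$ by $N$ in the final bound.

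I do not anticipate a real obstacle here; the argument is a short chain of inequalities once the lemma is in hand. If I wanted to avoid the boundary case-check entirely, an alternative is to observe $\phi^{r+1} \le N+1 \le 2N$ (for $N \ge 1$) and absorb the constant, but that weakens the bound by an additive $\log_\phi 2$ term, so I would prefer the exact treatment via $\phi - 1 = 1/\phi$ to get the stated $\log_\phi N$ cleanly.
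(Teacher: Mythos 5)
Your argument is correct and follows essentially the same route as the paper: both derive the bound by combining Lemma~\ref{lem:multi-fib} with the estimate $F_{i+2}\ge\phi^i$. The paper avoids your boundary case-check by observing directly that $N \ge F_{r+3}-1 \ge F_{r+2} \ge \phi^r$ for $r\ge 0$ (the first inequality holding since $F_{r+1}\ge 1$), which gives $r\le\log_\phi N$ in one line without ever introducing the $N+1$.
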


\begin{proof}
The corollary is immediate from Lemma~\ref{lem:multi-fib} since $F_{r + 3} - 1 \ge F_{r + 2} \ge \phi^r$ for $r \ge 0$.
\end{proof}

To obtain an efficient implementation of multi-root hollow heaps, we store the children of each node in a singly-linked list in decreasing order by rank.  We store the roots of each heap in a singly-linked circular list accessed via the minimum node.  Circular linking allows lists of roots to be catenated in constant time, allowing \meld\ to be done in constant time.  When doing a ranked link, we make the new child the first child of its new parent; this keeps the children in decreasing rank order.  When doing a \decreasekey\ that moves an item from a node~$u$ to a node~$v$, if $u.rank > 2$ we move to~$v$ all but the first two children of~$u$; if $u.rank \le 2$, we move no children.  Each heap operation except \hdelete\ takes constant time worst-case.  To find roots to link during a delete, we use an array of roots indexed by rank, as in Fibonacci heaps \cite{FrTa87}. (A single global array may be used for all heaps.)  The time for a \hdelete\ is $O(1)$ unless it deletes the item in the minimum node, in which case it takes
$O(H + T)$ time, where $H$ is the number of hollow roots destroyed and $T$ is the number of trees remaining after destruction of hollow roots but before any links.  After the links there is at most one tree per rank, totaling at most $\log_\phi N$ by Corollary~\ref{cor:multi-rank}.  Thus the number of links is at least $T - \log_\phi N$, and the number of melds is at most $\log_\phi N$. It follows that the time for the \hdelete\ is $O(1)$ per hollow root destroyed plus $O(1)$ per \link\ plus $O(\log N)$.


We bound the amortized time per operation by modifying the analysis of Fibonacci heaps.

\begin{theorem}\label{thm:multi-analysis}
The amortized time per multi-root hollow heap operation is $O(1)$ for each operation other than a \hdelete\ or \deletemin, and $O(\log N)$ per \hdelete\ or \deletemin\ on a heap of $N$ nodes.
\end{theorem}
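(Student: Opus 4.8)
The plan is to use the potential method, adapting the Fibonacci-heap analysis as the authors suggest. I would define the potential of a collection of heaps by
\[
\Phi \;=\; c_1\cdot(\text{number of roots}) \;+\; c_2\cdot\!\!\sum_{u\ \text{hollow}}(\text{number of children of }u),
\]
for constants $c_1,c_2$ fixed later, the sum running over all hollow nodes in all trees. Let $\kappa$ be a constant that bounds the work of destroying a single root, promoting a single child to a root, and performing a single \link. This $\Phi$ is always nonnegative and is $0$ for an empty collection, so the amortized costs it induces dominate the true costs. The goal is then to show $\Delta\Phi=O(1)$ (with $O(1)$ true time) for \makeheap, \findmin, \hinsert, \meld, and \decreasekey, and amortized cost $O(\log N)$ for \hdelete\ and \deletemin.

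For the cheap operations this is routine. \makeheap\ and \findmin\ leave $\Phi$ unchanged. \hinsert\ adds one root and \meld\ merely unites two root lists (changing the total number of roots by $0$), and neither creates a hollow node nor alters any hollow node's child count, so $\Delta\Phi=O(1)$. For a \decreasekey\ moving an item from $u$ to a new node $v$, the node $v$ becomes a new root and $u$ becomes hollow; crucially, by the rank invariant a node made hollow by a \decreasekey\ retains at most two children, so $\Delta\Phi\le c_1+2c_2=O(1)$, and the children are spliced in $O(1)$ time. (The root optimization only decreases $\Delta\Phi$.)

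The expensive case is a \deletemin, or a \hdelete\ that removes the item in the minimum node. Recall from the discussion preceding the theorem that the true cost is $O(H+L+\log N)$, where $H$ is the number of hollow roots destroyed and $L$ the number of links, and that at most $\log_\phi N$ trees remain. I would charge this against $\Phi$ in three parts. First, deleting the item turns the minimum node, of some rank $\rho$, into a hollow node with $\rho$ children, raising $\Phi$ by $c_2\rho$; by Corollary~\ref{cor:multi-rank}, $\rho\le\log_\phi N$, so this contributes $O(\log N)$. Second, destroying a hollow root $u$ with $d$ children removes $u$ from the roots ($-c_1$), promotes its $d$ children to roots ($+c_1 d$), and deletes $u$'s contribution to the child-sum ($-c_2 d$), giving $\Delta\Phi=-c_1+(c_1-c_2)d$; adding the $\kappa(1+d)$ true cost yields amortized cost $(\kappa-c_1)+(\kappa+c_1-c_2)d$, which is nonpositive once $c_1\ge\kappa$ and $c_2\ge\kappa+c_1$. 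Third, each \link\ removes one root, decreasing $\Phi$ by $c_1\ge\kappa$ and so covering its $O(1)$ true cost. Summing, every destruction and every link has nonpositive amortized cost, leaving only the $O(\log N)$ from emptying the minimum node plus the $O(\log N)$ overhead of the melds and the rank-indexed array; hence the amortized cost is $O(\log N)$. Finally, a \hdelete\ on a non-minimum node of rank $r$ runs in $O(1)$ time and makes that node hollow with $r$ children, raising $\Phi$ by $c_2 r\le c_2\log_\phi N$, again $O(\log N)$.

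The crux, and the departure from the Fibonacci-heap argument, is the second part: a single \deletemin\ may destroy many hollow roots, and a node made hollow by a \emph{deletion} keeps all $r$ of its children by the rank invariant, so promoting them costs $\Theta(r)=\Theta(\log N)$ per such root—far more than the constant a naive ``roots plus hollow nodes'' potential could release. Weighting each hollow node by its number of children is exactly what lets each destruction pay for promoting its own children, with the potential pre-paid at the $O(\log N)$-cost deletion that created the hollow node. The remaining point to verify carefully is that \decreasekey\ stays $O(1)$ under this heavier potential, which is precisely where the rank invariant's cap of two children on a \decreasekey-hollow node is needed.
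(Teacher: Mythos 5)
Your proof is correct, and it reaches the theorem by a somewhat different route than the paper. The paper splits the accounting into two mechanisms: it charges each hollow-node destruction \emph{directly} to the \hinsert\ or \decreasekey\ that created the node, and then uses a potential function only to count links (one unit on each full node that is not a child of another full node, so that every ranked link releases exactly one unit). You instead fold everything into a single potential $c_1\cdot(\#\text{roots})+c_2\sum_{u\ \text{hollow}}(\#\text{children of }u)$: the root term pays for links, and the hollow-weighted term prepays both the destruction of each hollow node and the promotion of its children to roots. Both arguments lean on the same two facts in the same places --- the rank invariant caps a \decreasekey-hollowed node at two children (keeping \decreasekey\ at $O(1)$), and Corollary~\ref{cor:multi-rank} caps at $\log_\phi N$ the children of a node hollowed by a deletion (giving the $O(\log N)$ charge). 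Your version is more self-contained and makes explicit the child-promotion cost that the paper absorbs into its ``$O(H+T)$'' bound by observing that each promoted child is itself either a future destroyed hollow root or the root of a surviving tree; the paper's version keeps the potential minimal, which is what lets it read off the sharp amortized \emph{comparison} counts stated in the remark following the theorem (e.g., at most three comparisons per \decreasekey), a byproduct your heavier constants $c_1,c_2$ obscure. One small imprecision: you ``recall'' that the true cost of a minimum deletion is $O(H+L+\log N)$, whereas the paper states $O(H+T)$; the two are equivalent only after the observation that $T\le L+\log_\phi N$ and that promoted children are absorbed into $H+T$, and your own $\kappa(1+d)$-per-destruction accounting in fact covers the promotion work without needing that shortcut, so nothing is lost.
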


\begin{proof}
The worst-case time per operation is $O(1)$ except for a \hdelete\ that deletes the item in the minimum node, which takes time $O(1)$ per hollow node destroyed plus $O(1)$ per \link\ plus $O(\log N)$.
We charge node destructions against the corresponding node creations, one per \hinsert\ and one per \decreasekey.  To obtain the theorem, it remains to bound the number of links.


To count links use a \emph{potential} function \cite{Tarjan85}.  We give each full node that is not a child of another full node a potential of one, and all other nodes (hollow nodes and full children of full nodes) a potential of zero.  The potential of a heap is the sum of the potentials of its nodes.  We define the \emph{amortized cost} of an operation to be the number of links plus the increase in potential.  Since the initial potential is zero (the data structure is empty) and the potential is always non-negative, the total number of links is at most the sum of the amortized costs of all the operations.


All the links are in \hdelete\ operations that delete the item in the minimum node, so the amortized cost of every other operation is the potential change.  This is one per \hinsert\ (one new root), zero per \findmin\ or \meld, and at most three per \decreasekey\ (one new root and at most two full children of a new hollow node).  In a \hdelete, each \link\ has an amortized cost of zero: it converts a full root into a full child of a full root, reducing the potential of the new child by one.  The only other potential change in a \hdelete\ is caused by removing an item.  This creates a hollow node~$u$, increasing the potential by one per full child of~$u$, at most $\log_\phi N$ in total.  We conclude that the number of links is $O(1)$ per heap operation plus $\log_\phi N$ per \hdelete\ on a heap of $N$ nodes.
\end{proof}

{\bf Remark:} The proof of Theorem~\ref{thm:multi-analysis} also gives a bound on the amortized number of comparisons per operation: at most one per \hinsert\ and \meld, three per \decreasekey, and $2\log_\phi N$ per \hdelete\ on a heap of $N$ nodes.

We have obtained multi-root hollow heaps from Fibonacci heaps by making a very simple change: when doing a \decreasekey\ operation at a node~$u$, instead of moving the entire subtree rooted at~$u$, we leave~$u$ and up to two of its children (and their subtrees) in place.  This preserves enough of the tree balance to avoid doing cascading cuts as in Fibonacci heaps, or cascading rank changes as in \cite{HaSeTa11,KTZ14}, or restructuring to reduce heap-order violations as in \cite{BrLaTa12,DrGaShTa88,KaTa08}.  As a result, \decreasekey\ operations take $O(1)$ time worst-case as well as amortized, and no additional state information, such as mark bits, is needed.
The Fibonacci numbers enter the analysis (Lemma~\ref{lem:multi-fib}) more directly, and the amortized analysis (Theorem~\ref{thm:multi-analysis}) becomes simpler.  The implementation is also simpler: singly-linked lists replace the doubly-linked lists needed in Fibonacci heaps, and no parent pointers are needed.

The price we pay for these changes is threefold: the data structure becomes exogenous rather than endogenous (items point to nodes and vice versa, rather than items being nodes); the amortized time of \hdelete\ and \deletemin\ becomes $O(\log N)$ rather than $O(\log n)$, and the space used by the data structure is $O(N)$ rather than $O(n)$, where $n$ is the number of items in the heap and $N$ is the number of nodes.  The latter is at most the number of \hinsert\ and \decreasekey\ operations.  We can reduce the time and space bounds to $O(\log n)$ and $O(n)$ by periodically rebuilding the data structure, as we discuss in Section~\ref{sec:rebuilding}.


Multi-root hollow heaps are asymptotically optimal, but they do not use all available key order information.  In particular, they do not keep track of all key comparisons done when updating minimum nodes during melds, nor do they keep track of the decreasing key order of nodes successively holding the same item.  In the next two sections we modify the data structure to use this additional information.

\section{One-root hollow heaps}\label{sec:one-root}

By allowing links between roots of different ranks, we can obtain a one-root version of hollow heaps.
This idea is quite general and is orthogonal to the idea of hollow nodes: it was used in \cite{HaSeTa11} to obtain a one-root version of rank-pairing heaps and in \cite{KTZ14} to obtain a one-root version of Fibonacci heaps.  Here we apply it to hollow heaps.
In the next section, we modify hollow heaps more drastically, to avoid moving children during \decreasekey\ operations.

As described in Section~\ref{sec:multi-root}, a \emph{ranked link} applies to two full roots of equal rank.  It makes the node of larger key a child of the node of smaller key (the winner) and increases the rank of the winner by one, breaking a tie arbitrarily.  In contrast, an \emph{unranked link} applies to any two full roots, whether or not their ranks are equal.  It makes the node of larger key a child of the node of smaller key and changes no ranks, breaking a tie arbitrarily.  In either kind of link, the new child is the loser and its parent is the winner.  The loser of a link is a \emph{ranked} or \emph{unranked} child if the link is ranked or unranked, respectively.  A child retains its ranked or unranked state when moved to a new parent by a \decreasekey\ operation; it can only change state by losing another link, which can only happen after its parent becomes hollow and is later destroyed.


A \emph{one-root hollow heap} is either empty or is a single heap-ordered tree whose root is full.  The root is the minimum node.  We do the heap operations as in multi-root hollow heaps, with the following changes.  To \meld\ two non-empty heaps, do an unranked link of their roots.  In a \decreasekey\ that moves an item from a node~$u$ to a new node~$v$, initialize $v.rank = \max\{0, u.rank - 2\}$, and move to~$v$ all \emph{ranked} children of~$u$ of ranks less than $v.rank$ (along with their subtrees), and any or all of the unranked children of~$u$ (along with their subtrees).


The rank invariant for one-root hollow heaps is the following:


\begin{quote}
A node~$u$ of rank $r$ has exactly $r$ ranked children, of ranks $0, 1,\ldots, r - 1$, unless $r > 2$ and~$u$ was made hollow by a \decreasekey, in which case~$u$ has exactly two ranked children, of ranks $r - 2$ and $r - 1$.  (A node can have any number of unranked children.)
\end{quote}

Figure~\ref{fig:one-root} illustrates operations on a one-root hollow heap.

\begin{figure}[t]
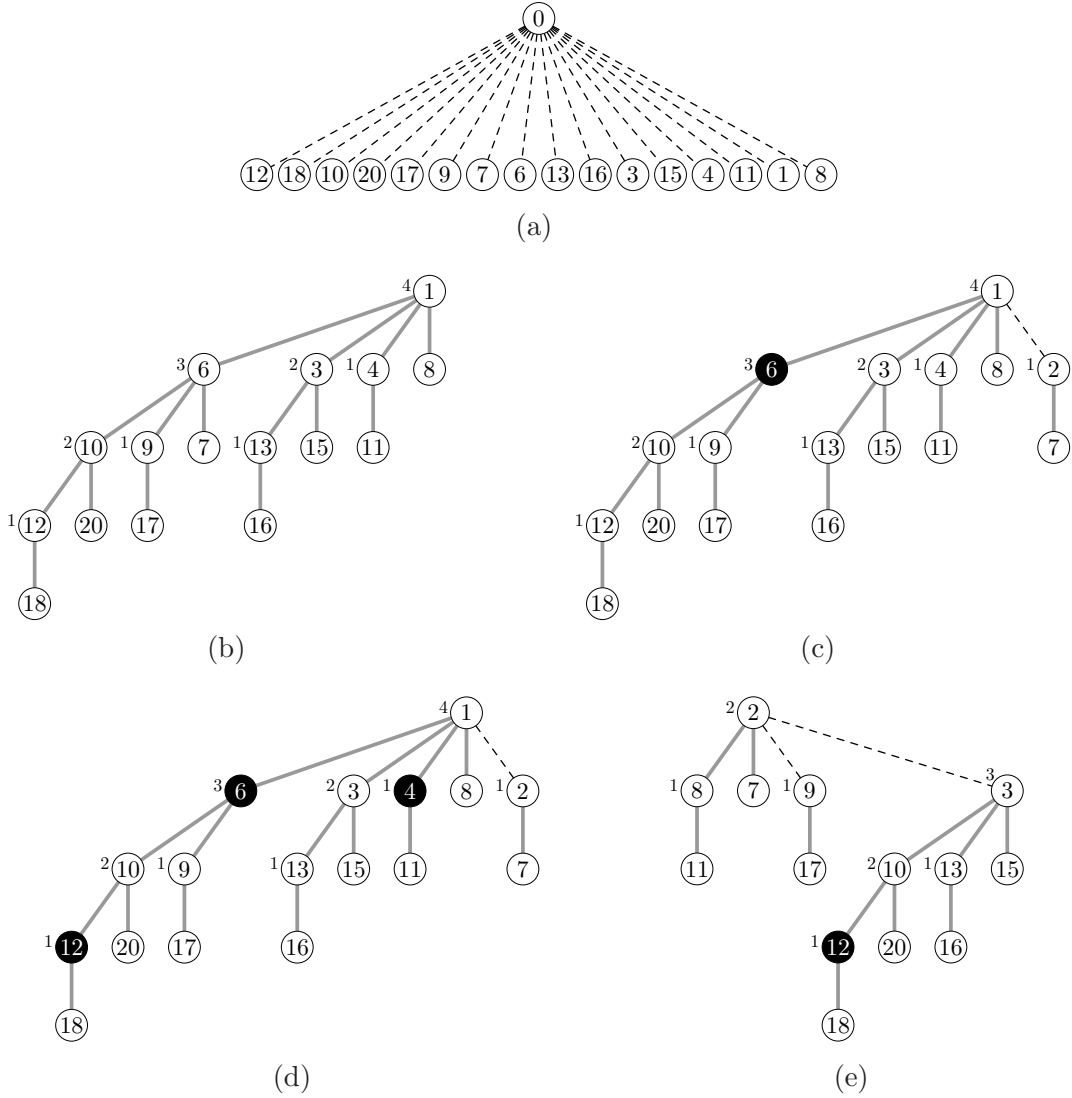

\begin{center}
\renewcommand{\tabcolsep}{0.2cm}
\begin{tabular}{c}
\includegraphics[scale=0.5]{figures2.22}\vspace{0.1cm}\\
(a) \vspace{0.35cm}
\end{tabular}
\begin{tabular}{cc}
\includegraphics[scale=0.5]{figures2.23} \hspace*{15pt}&\hspace*{15pt}
\includegraphics[scale=0.5]{figures2.24} \\
(b) \hspace*{15pt} & \hspace*{15pt} (c) \vspace{0.35cm}
\end{tabular}
\begin{tabular}{cc}
\includegraphics[scale=0.5]{figures2.25}  \hspace*{15pt}& \hspace*{15pt}
\includegraphics[scale=0.5]{figures2.26} \vspace{0.1cm} \\
(d)  \hspace*{15pt}&\hspace*{15pt} (e)
\end{tabular}
\end{center}
\caption{
Operations on a one-root hollow heap.  Numbers in nodes are keys; black nodes are hollow. Numbers next to nodes are non-zero ranks. Bold gray and dashed lines denote ranked and unranked links, respectively.
(a) The heap after the successive insertion of items with keys~$0,12,18,\ldots$. All links performed are unranked. (a) The heap after the deletion of the minimum item. All links performed are ranked. (c) The heap after decreasing key 6 to 2. (d) The heap after deleting the items with keys 4 and 12. (e) The heap after another \deletemin\ operation.
}\label{fig:one-root}
\end{figure}

\begin{theorem}\label{thm:eager-correct}
One-root hollow heaps correctly implement all the heap operations, maintain heap-ordered trees, and maintain the rank invariant.
\end{theorem}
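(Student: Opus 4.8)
The plan is to proceed exactly as suggested for Theorem~\ref{thm:multi-correct}, by induction on the number of heap operations, maintaining two invariants between consecutive operations: (i) each nonempty heap is a single heap-ordered tree whose root is full and hence is a minimum node, and (ii) the rank invariant for one-root hollow heaps. The base case is the empty heap produced by \makeheap, which satisfies both invariants vacuously, and \findmin\ merely reads off the full root, which has minimum key by heap order. The work is to check that \hinsert, \meld, \decreasekey, \deletemin, and \hdelete\ each preserve (i) and (ii), given that the input heap(s) satisfy them.

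The easy cases are \hinsert\ and \meld. A \meld\ performs a single unranked link of the two full roots, placing the larger-key root (and its subtree) as a new unranked child of the smaller-key root; this preserves heap order, leaves both ranks unchanged, and adds only an unranked child to the winner, so (ii) holds since a node may have arbitrarily many unranked children. \hinsert\ creates a fresh full node of rank $0$ (which trivially satisfies (ii)) and melds, so it reduces to the \meld\ case.

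The delicate case is \decreasekey. Here I would argue as follows. Let $u=e.node$ and let $r=u.rank$; since $u$ holds $e$ it is full, so by the inductive hypothesis $u$ is in the non-exceptional branch of the rank invariant and has exactly $r$ ranked children, of ranks $0,1,\ldots,r-1$. The operation creates $v$ with $v.rank=\max\{0,r-2\}$ and moves to $v$ precisely the ranked children of rank less than $v.rank$. The main point is to verify (ii) for both $u$ and $v$ afterward. For $v$: it receives exactly the ranked children of ranks $0,\ldots,v.rank-1$, so it is full with the correct complete set of ranked children and satisfies the non-exceptional branch. For $u$: when $r>2$ it keeps exactly its two ranked children of ranks $r-2$ and $r-1$ and, being now hollow by a \decreasekey, satisfies the exceptional branch; when $r\le 2$ no ranked children are moved (there are none of rank below $v.rank=0$), so $u$ retains all $r$ of them and satisfies the non-exceptional branch, which is the one that applies for $r\le 2$. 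Heap order for every moved or retained child $w$ follows from $v.key=k<u.key\le w.key$ (the first inequality is the \decreasekey\ precondition, the second is heap order at $u$ before the operation); any unranked children of $u$ may be split arbitrarily between $u$ and $v$ without affecting the ranked-children counts. The final unranked link melding $v$'s tree into the heap is handled by the \meld\ case.

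For \deletemin\ and \hdelete\ I would note that removing an item only makes a node hollow without changing any key, so heap order is untouched; the rank invariant is likewise unaffected because no child is added or removed. If the emptied node is not the root the operation ends. Otherwise we run the destruction-and-link phase: destroying a hollow root promotes its children to roots, which does not alter any node's own set of ranked children and so preserves (ii); a ranked link of two full roots of equal rank $r$ gives the loser (a node already satisfying (ii)) to the winner as a ranked child of rank $r$, raising the winner's rank to $r+1$ and completing its ranked-child ranks to $0,\ldots,r$, so the winner satisfies the non-exceptional branch for rank $r+1$; and the unranked links that finally combine the remaining full-rooted trees into one are again \meld\ steps. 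Since every link puts the larger key below the smaller and no key ever changes, heap order is maintained, and because all roots are full before linking and links preserve fullness of the winner, the process ends with a single tree whose full root is a minimum node, re-establishing (i). The one genuinely new bookkeeping relative to Theorem~\ref{thm:multi-correct} is tracking the ranked/unranked label of each child across moves and links, and I expect the $r>2$ sub-case of \decreasekey---where correctness of the exceptional branch hinges on $u$ having started with the complete ranked-child sequence $0,\ldots,r-1$---to be the step most worth stating carefully.
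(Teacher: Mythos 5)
Your proposal is correct and is exactly the argument the paper has in mind: the paper's proof of Theorem~\ref{thm:eager-correct} consists of the single sentence that the claim follows by straightforward induction on the number of heap operations, and your case analysis (meld as an unranked link, the $r>2$ versus $r\le 2$ split in \decreasekey, ranked links completing the child-rank sequence $0,\ldots,r$ during \deletemin) is precisely the routine verification being invoked. No discrepancy to report.
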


\begin{proof}
The proof is straightforward by induction on the number of heap operations.
\end{proof}

\begin{lemma}\label{lem:eager-fib}
A node of rank $r$ in a one-root hollow heap has at least $F_{r + 3} - 1$ descendants, both full and hollow.
\end{lemma}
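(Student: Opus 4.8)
The plan is to imitate the proof of Lemma~\ref{lem:multi-fib} exactly, since the quantity we are bounding---the number of descendants---and the Fibonacci recurrence it must satisfy are identical. The only change is that the trees here are built with both ranked and unranked links, so I must verify that the inductive step still goes through using only the \emph{ranked} children, which are the ones the new rank invariant controls.

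First I would induct on $r$. The base cases $r = 0$ and $r = 1$ are immediate: a node of rank $0$ is itself at least $F_3 - 1 = 1$ descendant, and a node of rank $1$ has at least a rank-$0$ ranked child, giving at least $1 + 1 = 2 = F_4 - 1$ descendants. For the inductive step $r \ge 2$, I would observe that by the one-root rank invariant, any node $u$ of rank $r$---whether full, or hollow by a \decreasekey---has \emph{ranked} children of ranks $r - 1$ and $r - 2$ (in the full case it has ranked children of all ranks $0, \ldots, r-1$, which includes these two; in the hollow-by-\decreasekey\ case it has exactly the two ranked children of ranks $r - 2$ and $r - 1$). These two subtrees are disjoint and disjoint from $u$ itself, so by the induction hypothesis $u$ has at least $1 + (F_{r+2} - 1) + (F_{r+1} - 1) = F_{r+3} - 1$ descendants, counting both full and hollow nodes.

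The point I would emphasize, and which is the only substantive difference from Lemma~\ref{lem:multi-fib}, is that the argument relies solely on ranked children. Unranked links change no ranks and the invariant places no constraints on the number of unranked children, so they can only \emph{increase} the descendant count and never interfere with the lower bound. Thus it suffices that the rank invariant guarantees the two ranked children of ranks $r-1$ and $r-2$ exist in every case, exactly as before. I do not anticipate a genuine obstacle here; the one subtlety worth stating explicitly is that a child retains its ranked/unranked status when moved by a \decreasekey, so the ranked children promised by the invariant are genuinely present and their ranks are as claimed throughout the inductive accounting.
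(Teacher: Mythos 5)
Your proof is correct and is essentially the paper's own argument: the paper simply states that the proof is identical to that of Lemma~\ref{lem:multi-fib}, i.e., induction on $r$ using the two (ranked) children of ranks $r-1$ and $r-2$ guaranteed by the rank invariant in every case. Your explicit check that the one-root invariant still supplies these two ranked children, and that unranked children only add descendants, is exactly the observation that makes the carry-over work.
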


\begin{proof}
The proof is the same as that of Lemma~\ref{lem:multi-fib}.
\end{proof}

\begin{corollary}\label{cor:eager-rank}
The rank of a node in a one-root hollow heap of $N$ nodes is at most $\log_\phi N$.
\end{corollary}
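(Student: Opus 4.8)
The plan is to derive this corollary in exactly the same way that Corollary~\ref{cor:multi-rank} was derived from Lemma~\ref{lem:multi-fib}, since Lemma~\ref{lem:eager-fib} gives us precisely the same descendant bound for one-root hollow heaps as Lemma~\ref{lem:multi-fib} gave for multi-root hollow heaps. The key observation is that the statement to be proved is structurally identical to Corollary~\ref{cor:multi-rank}, with the only change being that the underlying structure is now a one-root hollow heap; and the descendant count that drives the argument is already established by Lemma~\ref{lem:eager-fib}.

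First I would invoke Lemma~\ref{lem:eager-fib}: a node of rank $r$ has at least $F_{r+3}-1$ descendants (full and hollow). Next I would use the elementary Fibonacci estimate recalled just before Lemma~\ref{lem:multi-fib}, namely $F_{i+2}\ge\phi^i$, to conclude that
\[
F_{r+3}-1 \;\ge\; F_{r+2} \;\ge\; \phi^r,
\]
where the first inequality holds because $F_{r+3}-1 = F_{r+2}+F_{r+1}-1 \ge F_{r+2}$ for $r\ge 0$ (using $F_{r+1}\ge 1$). Since every node's descendant set is contained in the set of all $N$ nodes of the heap, we have $\phi^r \le F_{r+3}-1 \le N$, and taking logarithms to base $\phi$ gives $r \le \log_\phi N$, as claimed.

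There is essentially no obstacle here: the corollary is an immediate consequence of the already-proved descendant bound together with the stated Fibonacci inequality, and the proof is a one-line deduction word-for-word identical to that of Corollary~\ref{cor:multi-rank}. The only point worth checking is that the descendant set of a single node is indeed bounded by the total node count $N$ of the heap, which is trivial since all descendants are nodes of the heap. I therefore expect the author's proof to simply read ``The corollary is immediate from Lemma~\ref{lem:eager-fib} since $F_{r+3}-1\ge F_{r+2}\ge\phi^r$ for $r\ge 0$,'' mirroring the proof of Corollary~\ref{cor:multi-rank}.
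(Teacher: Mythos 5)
Your proof is correct and is exactly the paper's argument: the paper proves this corollary by citing Lemma~\ref{lem:eager-fib} and repeating the one-line deduction of Corollary~\ref{cor:multi-rank} via $F_{r+3}-1\ge F_{r+2}\ge\phi^r$. Your prediction of the author's wording is essentially verbatim accurate.
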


\begin{proof}
The proof is the same as that of Corollary~\ref{cor:multi-rank}, using Lemma~\ref{lem:eager-fib}.
\end{proof}

To obtain an efficient implementation of one-root hollow heaps, we store the children of each node in a singly-linked circular list, accessed via the \emph{last} child on the list.  The ranked children are first, in decreasing order by rank, followed by the unranked children.  Making each list of children circular and accessing such a list via its last node allows doing links in constant time while maintaining the desired order of children: when doing a ranked link, add the new child to the \emph{front} of the list of its new parent, when doing an unranked link, add it to the \emph{back}.  We do \emph{not} need to store the states of children (ranked or unranked), only the node ranks.  When doing a \decreasekey\ that moves an item from a node~$u$ to a new node~$v$, if $u.rank > 2$ we move all but the first two children of~$u$ to~$v$, along with their subtrees.  This leaves with~$v$ the two ranked children of highest rank, and moves all the unranked children.  (The implementation is free to move none, some, or all of the unranked children.)

\begin{theorem}\label{thm:eager-analysis}
The amortized time per one-root hollow heap operation is $O(1)$ for each operation other than a \hdelete\ or \deletemin, and $O(\log N)$ per \hdelete\ or \deletemin\ on a heap of $N$ nodes.
\end{theorem}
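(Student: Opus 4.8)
The plan is to follow the structure of the proof of Theorem~\ref{thm:multi-analysis} for multi-root heaps, adapting it to the two places where one-root heaps perform \emph{unranked} links: inside \meld, and in the final combining phase of a delete that empties the minimum node. First I would record the worst-case bounds exactly as before: every operation except a \hdelete\ or \deletemin\ that removes the item in the root takes $O(1)$ worst-case time, and such a delete takes $O(1)$ per hollow node destroyed, plus $O(1)$ per \link, plus $O(\log N)$ to find equal-rank roots (via the global array) and to combine the at most $\log_\phi N + 1$ surviving roots into a single tree, where I would invoke Corollary~\ref{cor:eager-rank} to bound the number of distinct ranks. As in the multi-root proof, I would charge each node destruction to the \hinsert\ or \decreasekey\ that created the node, so that it remains only to bound the total number of links by a potential argument.

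The key change is the choice of potential. The multi-root potential --- one unit for every full node that is not a child of a full node --- no longer suffices, because when the root is emptied it exposes \emph{all} its full children, and in a one-root heap the root may have arbitrarily many \emph{unranked} children (each \meld\ adds one). I would therefore use $\Phi = A + B$, where $A$ counts full nodes that are not children of full nodes and $B$ counts nodes that are currently unranked children. With this potential I would compute the amortized cost (links plus $\Delta\Phi$) of each non-delete operation and check it is $O(1)$: \hinsert\ adds one root, so $\Delta\Phi = +1$; \findmin\ costs $O(1)$ with $\Delta\Phi = 0$; \meld\ performs one unranked link that removes the loser from $A$ but adds it to $B$, so $\Delta\Phi = 0$ and the amortized cost is just the single link; and \decreasekey\ has $\Delta B = 0$ and $\Delta A \le 3$ (the new node~$v$, plus the at most two \emph{ranked} children that $u$ retains and that become exposed when $u$ turns hollow). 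Here it is essential that the implementation moves \emph{all} unranked children of~$u$ to~$v$, so that $u$ retains at most two children.

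The main obstacle is the delete that empties the root~$r$, and the argument is where the $B$-term earns its keep. Removing the item from~$r$ exposes $r$'s full children, contributing $+(a+b)$ to $A$, where $a \le \log_\phi N$ is the number of ranked children (bounded via Corollary~\ref{cor:eager-rank}, since the full root of rank $a$ has exactly $a$ ranked children) and $b$ is the number of \emph{full unranked} children, which can be large. However, destroying the hollow roots (including~$r$) turns their children back into roots, decreasing $B$ by the total number $D$ of their unranked children; since $b \le D$, the dangerous $+b$ in $A$ is cancelled by at least $-b$ in $B$. I would then observe that the $L = T - 1$ links performed to reduce $T$ full roots to one tree each remove one node from $A$, so each is individually free ($\Delta\Phi = -1$ cancels the link's unit of cost), while the at most $\log_\phi N$ unranked links of the combining phase add $+q$ to $B$ with $q \le \log_\phi N$. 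Collecting terms, the amortized cost of the delete is $a + b - 1 + q - D \le 2\log_\phi N = O(\log N)$. Finally, since $\Phi$ starts at $0$ and is always non-negative, summing the amortized costs bounds the total number of links, which together with the worst-case accounting and the charge of destructions to creations yields the stated $O(1)$ and $O(\log N)$ amortized bounds.
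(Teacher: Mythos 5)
Your overall strategy is the paper's: keep the multi-root analysis of Theorem~\ref{thm:multi-analysis} and adjust the potential so that unranked links, and the unboundedly many unranked children a node may have, are paid for. Your bookkeeping for the deletion of the item in the root is correct, as is your treatment of \meld\ and of \decreasekey\ under your stated restriction that all unranked children of $u$ move to $v$. But the potential $\Phi=A+B$ has a genuine flaw that the proposal never confronts: a \emph{full unranked child of a hollow node} is counted twice (once in $A$, once in $B$) and so carries $2$ units, whereas a full unranked child of a full node carries only $1$. Hence any event that makes a node hollow \emph{without destroying it} raises $\Phi$ by one unit per full unranked child of that node, and that quantity need not be $O(\log N)$. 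Concretely: create a root $x$, \meld\ with $M$ singleton heaps of larger key (so $x$ wins $M$ unranked links and acquires $M$ full unranked children), \hinsert\ an item of smaller key so that $x$ loses an unranked link and becomes a non-root, then \hdelete\ $x$'s item. This \hdelete\ does not remove the item in the minimum node, so it takes $O(1)$ actual time and destroys nothing, yet $\Delta\Phi = M = \Theta(N)$: each of the $M$ children enters $A$ while remaining in $B$. Its amortized cost is therefore $\Theta(N)$, not $O(\log N)$, and there is no later operation to which this increase can legitimately be deferred. (Your requirement that \decreasekey\ move \emph{all} unranked children is a symptom of the same double count; it dodges the problem there, but cannot dodge it for \hdelete, since a non-root node keeps all its children when its item is removed.)

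The fix is exactly the paper's potential: give each full node one unit \emph{unless it is a full ranked child of a full node}. Equivalently, replace your $B$ by ``full unranked children of \emph{full} nodes,'' so that when a parent goes hollow its full unranked children merely migrate from $B$ to $A$ at no net cost, and only the ranked children --- at most $\log_\phi N$ of them by Corollary~\ref{cor:eager-rank} --- gain potential. With that change the rest of your accounting (ranked links free, at most $\log_\phi N$ unranked links per minimum deletion each costing one, $+1$ per \hinsert, $O(1)$ per \decreasekey) goes through verbatim, and the restriction on which unranked children \decreasekey\ must move becomes unnecessary.
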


\begin{proof}
The proof is like the proof of Theorem~\ref{thm:multi-analysis} but with a slightly different potential function: each full node has one unit of potential unless it is a full ranked child of a full node, in which case it has zero potential.  A ranked link reduces the potential by one, giving it an amortized cost of zero.  An unranked link does not change the potential giving it an amortized cost of one.  The amortized cost of an operation other than a delete or delete-min is $O(1)$ by the same argument as in the proof of Theorem~\ref{thm:multi-analysis}.  In a \hdelete\ or \deletemin, making a node hollow by deleting its item increases the potential by $O(\log N)$, since only ranked children increase in potential.  If the deletion is of the item in the minimum node, the ranked links have amortized cost zero, and there are $O(\log N)$ unranked links, at most one per rank.  It follows that the amortized cost of a \hdelete\ or \deletemin\ is $O(\log N)$.

\end{proof}

We conclude this section by discussing two options in doing links.  First, at the end of a delete that removes the item in the minimum node, we are free to do the melds in any order.  Repeatedly melding two trees of minimum rank is natural and easy to implement, but we do not have theory to justify this order.

Second, we can in certain cases increase the rank of the winner in an unranked link.  In particular, in an unranked link of two roots of equal rank, we can increase the rank of the winner by one, making the link into a ranked link; in an unranked link of two roots of unequal rank, if the root of smaller rank is the winner, we can increase its rank to any value no greater than the rank of the loser.  To allow increases of the second type, we need to change the algorithm slightly.  We define any link that increases the rank of the winner to be a ranked link, and any other link to be an unranked link.  As in the original implementation, the loser of a ranked or unranked link becomes the first or last child of its new parent, respectively.  When doing a \decreasekey\ that moves an item from a node~$u$ to a new node~$v$, if~$u$ has at most two children we merely set the rank of~$v$ to zero; if~$u$ has more than two children we leave the first two of them with~$u$, move the rest to~$v$ (preserving their order), and set the rank of~$v$ equal to that of its new first child.  This method maintains the following invariant:

\begin{quote}
A node of rank $r > 0$ either has a first child of rank at least~$r$, or it has a first child of rank $r-1$ and a second child of rank at least
$r-2$.
\end{quote}

An extension of our analysis shows that this method is correct and efficient.  With this method it is possible for a node to accumulate many more ranked children than its rank.  To handle this in the analysis, we give each full node additional potential equal to the maximum of zero and its number of ranked children minus its rank.  We omit the details of the analysis, since they are so similar to what we have already presented.

Whether either or both of these options improves performance is an experimental question; they certainly complicate the algorithm.

\section{Two-parent hollow heaps}\label{sec:lazy}

Our second and more drastic change to hollow heaps eliminates the movement of children during \decreasekey\ operations.  This change comes at the cost of converting the data structure from a tree or set of trees to a (treelike) \emph{dag} (directed acyclic graph).  During a \decreasekey\ that moves an item from a node~$u$ to a new node~$v$ (still of initial rank $\max\{0, u.rank - 2\}$), instead of moving certain children of~$u$ to~$v$, we merely make~$u$ a child of~$v$.  This gives~$u$ a \emph{second} parent.  We make this change to the one-root hollow heaps of Section~\ref{sec:one-root}, but it applies equally to the multi-root hollow heaps of Section~\ref{sec:multi-root}.  We call the resulting data structure the \emph{two-parent hollow heap}, in contrast to the data structures of Sections~\ref{sec:multi-root} and~\ref{sec:one-root}, which  we jointly call \emph{one-parent hollow heaps}.

To present the details of this idea, we extend our tree terminology to dags.  If $(v, w)$ is a dag arc, we say that~$v$ is the \emph{parent} of~$w$ and~$w$ is a \emph{child} of~$v$.  A node with no parent is a \emph{root}; a node with no children is a \emph{leaf}.  A dag whose nodes have keys is \emph{heap-ordered}
if and only if for every arc $(v, w)$, $v.key \le w.key$. That is: any topological order of the dag arranges the nodes in non-decreasing order by key.

A \emph{two-parent hollow heap} is either empty or is a heap-ordered dag with one full root and no hollow roots.  The root is the minimum node.  We do the heap operations as in one-root hollow heaps, with the following change: in a \decreasekey\ that moves an item from a node~$u$ to a new node~$v$, make~$u$ a child of~$v$; do not move any of the children of~$u$.
Figure~\ref{fig:hollow-heap} illustrates operations on a two-parent hollow heap.

\begin{figure}[t]
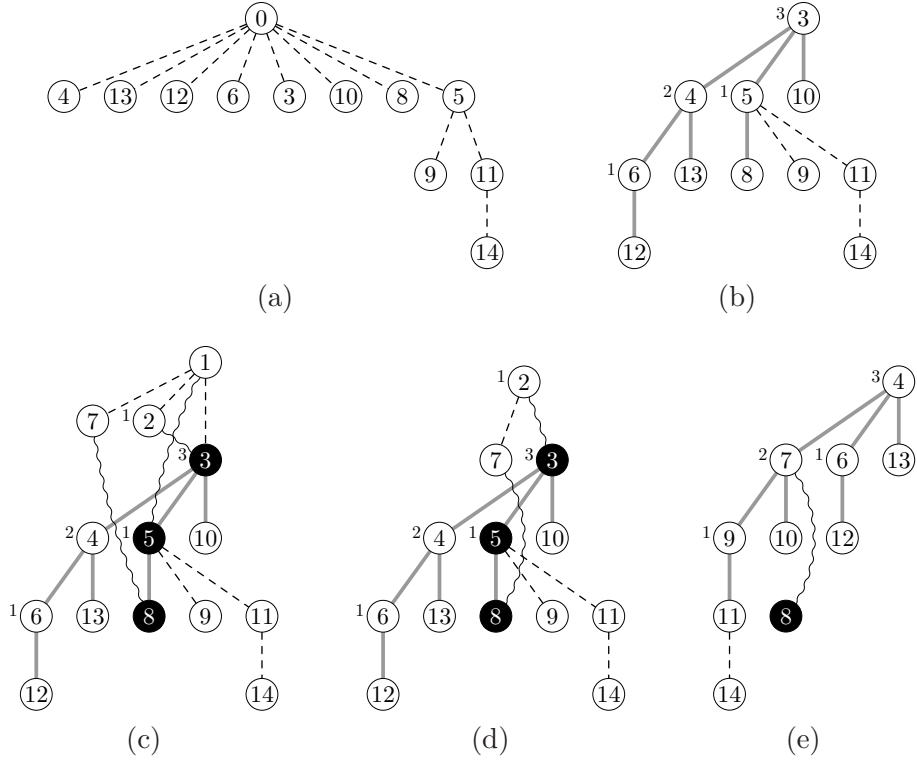

\begin{center}
\renewcommand{\tabcolsep}{0.2cm}
\begin{tabular}{cc}
\includegraphics[scale=0.5]{figures2.8} \hspace*{10pt}&\hspace*{10pt}
\includegraphics[scale=0.5]{figures2.9} \\
(a) \hspace*{10pt} & \hspace*{10pt} (b) \vspace{0.35cm}
\end{tabular}
\begin{tabular}{ccc}
\includegraphics[scale=0.5]{figures2.16}  \hspace*{5pt}& \hspace*{5pt}
\includegraphics[scale=0.5]{figures2.12}  \hspace*{5pt}& \hspace*{5pt}
\includegraphics[scale=0.5]{figures2.13}\\
(c)  \hspace*{5pt}&\hspace*{5pt} (d) \hspace*{5pt}&  \hspace*{5pt}(e)
\end{tabular}
\end{center}
\caption{
Operations on a two-parent hollow heap.  Numbers in nodes are keys; black nodes are hollow.  Bold gray, dashed, and squiggly lines denote ranked links, unranked links, and second parents, respectively.  Numbers next to nodes are non-zero ranks. (a) Successive insertions of items with keys 14, 11, 5, 9, 0, 8, 10, 3, 6, 12, 13, 4 into an initially empty heap. (b) After a \deletemin\ operation.  All links during the \deletemin\ are ranked.  (c) After decreases of key 5 to 1, then key 3 to 2, and then key 8 to 7.  (d) After a second \deletemin.  The only hollow node that becomes a root is the original root.  One unranked link, between the nodes holding keys 2 and 7 occurs.  (e) After a third \deletemin.  The hollow nodes with keys 3 and 5 become roots and are destroyed, the hollow root with key 8 loses one parent.  All links are ranked. }\label{fig:hollow-heap}
\end{figure}

\begin{theorem}\label{thm:lazy-correct}
Two-parent hollow heaps correctly implement all the heap operations and maintain each heap as a heap-ordered dag with one full root and no empty roots.
\end{theorem}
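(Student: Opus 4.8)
The plan is to strengthen the statement slightly and prove it by induction on the number of heap operations, exactly as for Theorems~\ref{thm:multi-correct} and~\ref{thm:eager-correct}, but now tracking the dag structure rather than a forest of trees. The induction hypothesis I would carry is: between operations, a non-empty heap is represented by a finite directed acyclic graph that is heap-ordered and has exactly one root, with that root full (equivalently, one full root and no hollow roots). This hypothesis already yields correctness of \findmin\ and of the item chosen by \deletemin, because heap-order together with a unique root forces the root to be a minimum node: starting from any node of minimum key and ascending to a parent (whose key is no larger), acyclicity and finiteness bring you to the unique root, whose key can therefore be no larger than the minimum. The base case (\makeheap) is the empty dag, which satisfies the hypothesis vacuously.

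For the operations that do not delete the item in the minimum node — \hinsert, \meld, \decreasekey, and a \hdelete\ of a non-minimum item — the verification is local. \meld\ and \hinsert\ perform a single unranked \link\ of two roots, which makes the larger-key root a child of the smaller-key root; this preserves heap-order on the new arc, leaves exactly one root (full, since both inputs had full roots), and cannot create a cycle because both linked nodes are roots, so neither is reachable from the other. For \decreasekey\ I would check the one feature new to two-parent heaps: we create $v$ with $v.key=k$, add the arc $(v,u)$, and then unranked-link $v$ with the old root. Heap-order on $(v,u)$ holds because $k$ is smaller than the key $u$ retains, and acyclicity holds because $v$ is freshly created with no incoming arc, so it cannot lie on a cycle; the ensuing \link\ is again between two roots. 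A \hdelete\ that merely empties a non-root node changes no arcs and no keys ($u$ keeps its key), so it trivially preserves every part of the hypothesis while leaving the unique full root untouched.

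The substantive case is a \deletemin, or a \hdelete\ of the minimum item, which empties the root and then runs the destroy-and-link phase, and here I would argue three things. First, termination and the disappearance of hollow roots: each iteration destroys one hollow root $x$, detaches each of its children, and promotes a child to a root only when that child loses its last parent; since a node is removed at each step the loop halts, and it halts precisely when no hollow root remains. Second, a covering property: for any surviving node $w$, repeatedly ascend to a surviving parent; if at some point all of $w$'s parents have been destroyed, then $w$ became a root, and being undestroyed it cannot have been hollow, so it is full — hence the ascent terminates (by acyclicity) at a surviving full root, and every surviving node is reachable from some surviving full root. Third, that the final \link\ phase restores the hypothesis: we perform links (ranked while two roots share a rank, then unranked) until a single root remains; each \link\ preserves acyclicity (both nodes involved are roots) and heap-order (the larger key becomes the child), the result has exactly one root, and that root is full because only hollow roots were destroyed and links combine full roots. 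Together with the covering property this shows the final dag is heap-ordered with a unique full root, closing the induction.

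I expect the destroy-and-link phase to be the main obstacle, and within it the covering/reachability argument: the two-parent mechanism lets a node survive the destruction of one parent, so one must argue carefully that no surviving node is stranded without a root above it and that linking the surviving roots down to one yields a single connected heap-ordered dag. The acyclicity checks are individually easy — every arc we ever add points from a (former) root into the structure, and a node with no incoming arc cannot close a cycle — but they must be verified at every arc addition, namely each \link\ and the second-parent arc introduced by \decreasekey, to be certain the dag property is never violated.
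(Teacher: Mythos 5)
Your proof is correct and follows exactly the route the paper takes: the paper's entire proof of Theorem~\ref{thm:lazy-correct} is the single sentence ``The proof is straightforward by induction on the number of heap operations,'' and your writeup is a sound, detailed expansion of precisely that induction (the per-operation local checks, the acyclicity of each arc addition, and the destroy-and-link analysis for \deletemin). No gaps; you have simply supplied the details the authors chose to omit.
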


\begin{proof}
The proof is straightforward by induction on the number of heap operations.
\end{proof}

The analysis of two-parent hollow heaps is less straightforward than that of one-parent hollow heaps.  To do the analysis, we define a way of virtually moving children among nodes that mimics their movement in one-parent hollow heaps.  This definition is part of the analysis only; it is \emph{not} part of the algorithm.  First we prove a few properties of the algorithm.

\begin{lemma}\label{lem:4.2}
A node in a two-parent hollow heap has at most one parent if it is full, at most two if it is hollow.  Once a node is hollow it cannot acquire a new parent.
\end{lemma}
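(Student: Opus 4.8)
The plan is to prove a slightly stronger invariant by induction on the sequence of heap operations: at every moment every full node has at most one parent, every hollow node has at most two parents, and no arc is ever directed into a node that is already hollow. I would begin by cataloguing the only two ways an arc $(v,w)$ can be created by the algorithm. The first is a \link\ (ranked or unranked) performed during a \hdelete\ or \deletemin, in which the loser $w$ becomes a child of the winner $v$. The second is the step specific to two-parent hollow heaps: in a \decreasekey\ that moves the item of node $u$ to a freshly created node $v$, we make $u$ a child of $v$. No other operation creates arcs, so it suffices to analyze these two cases together with the events that destroy arcs.

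For the \link\ case I would invoke the rule that links are performed only between two \emph{full roots}. Hence at the moment $w$ loses a link it is a full root, so it has zero parents beforehand; afterwards it has exactly one parent and is a full non-root. In particular, a full node can gain a parent only while it is a root, so a full node never accumulates more than one parent. For the \decreasekey\ case, the node $u$ acquiring the new parent $v$ is precisely the node being made hollow in the same step. By the induction hypothesis $u$ was full immediately before, hence had at most one parent; after the step $u$ is hollow with at most two parents. The only other way a node becomes hollow is by having its item deleted during a \hdelete\ or \deletemin; this creates no arc, so such a node retains its (at most one) parent. In all cases a hollow node has at most two parents, establishing the two counting bounds.

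It remains to argue that a hollow node can never acquire a new parent. A hollow node holds no item, so it can never be the node $u$ whose item is moved by a \decreasekey; thus it cannot gain a parent through the \decreasekey\ step. A hollow node is also not full, so it can never be the loser of a \link; thus it cannot gain a parent through a link either. Since these are the only two parent-creating events, a node acquires no further parents once it becomes hollow (recall that a hollow node stays hollow until destroyed). Combined with the counting bounds above, this yields the lemma. The arc-destroying events (destruction of a hollow root, which makes its children lose that one parent) only remove arcs, so a two-parent hollow child becomes a one-parent hollow child and a one-parent child becomes a root; neither bound can be violated.

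The main obstacle, and the point requiring the most care, is the interaction between the full/hollow transition and parent acquisition: I must verify that the \emph{only} way to obtain a second parent is the \decreasekey\ step, and that this step simultaneously turns the node hollow, so that no full node ever accumulates two parents. A secondary subtlety is to confirm that the optimization of changing a root's key in place during a \decreasekey\ creates no arc and leaves the node full, hence remains consistent with the invariant.
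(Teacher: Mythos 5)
Your proposal is correct and follows essentially the same route as the paper: both identify losing a \link\ (possible only for a full root, hence at most one parent) and the \decreasekey\ arc (which simultaneously makes the node hollow) as the only parent-creating events, and both conclude that a hollow node, being neither full nor a holder of an item, can acquire no further parents. Your version merely spells out the induction and the root-key optimization in more detail than the paper does.
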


\begin{proof} There are only two ways for a node to acquire a parent: a full root can acquire a parent by losing a \link, and a full node can acquire a parent by becoming hollow in a \decreasekey\ operation.  A root has no parents, so when a node loses a \link\ it has only one parent.  Once a node becomes hollow, it cannot become full again, so it cannot acquire a new parent.
\end{proof}


We call nodes that hold the same item (at different times) \emph{clones}.

\begin{lemma}\label{lem:4.3} 
At any given time, any maximal set of clones forms a path in the dag, in decreasing order by creation time.  Each clone on the path, except possibly the first, is hollow.  The path changes only by insertion or deletion at the front.  Thus the path forms a stack.
%
%
%
\end{lemma}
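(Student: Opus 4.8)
The plan is to prove the statement by induction on the sequence of heap operations, tracking for each item the list of nodes that have held it. Recall that a new clone of an item is created only by a \decreasekey: when the item moves from its current node~$u$ to a freshly created node~$v$, the operation makes~$u$ hollow and adds the arc $(v,u)$, so that the newer node~$v$ becomes the parent of the older node~$u$. Thus if $u_1, u_2, \ldots, u_k$ are the clones of a fixed item listed in the order they were created, the \decreasekey\ that creates $u_{i+1}$ simultaneously installs the arc $(u_{i+1}, u_i)$ and makes $u_i$ hollow. I would call $(u_{i+1},u_i)$ the \emph{clone arc} of $u_i$, to distinguish it from an arc created by a \link.

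First I would dispatch the easy parts of the invariant. The clone arcs alone give the directed path $u_k \to u_{k-1} \to \cdots \to u_1$, whose nodes appear in strictly decreasing order of creation time with $u_k$ first. Every $u_i$ with $i<k$ was made hollow when $u_{i+1}$ was created, and a hollow node stays hollow until it is destroyed, so $u_1,\ldots,u_{k-1}$ are permanently hollow; only the front node $u_k$ can be full, and it is full exactly when it currently holds the item. A \decreasekey\ on the item is insertion at the front: it creates $u_{k+1}$, makes the old front $u_k$ hollow, and installs the clone arc $(u_{k+1},u_k)$. A \deletemin\ or \hdelete\ that removes the item merely makes the front $u_k$ hollow without altering the path, and the remaining operations (\hinsert, \meld, \findmin, and the links performed during a delete) neither create nor destroy clones of this item and touch only link arcs, so they leave the path unchanged.

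The main obstacle, and the only delicate step, is to show that a clone can leave the dag only from the front, so that destruction is deletion at the front and the list behaves as a stack. A node is destroyed only while it is a hollow root, and the arcs leaving a node are removed only when that node is itself destroyed. Hence for $i<k$ the clone arc $(u_{i+1},u_i)$, which leaves $u_{i+1}$, survives until $u_{i+1}$ is destroyed, so by Lemma~\ref{lem:4.2} the node $u_i$ still has $u_{i+1}$ among its (at most two) parents and cannot be a root before $u_{i+1}$ is gone. Consequently the clones can be destroyed only in the order $u_k, u_{k-1}, \ldots, u_1$: each destruction removes the current front, and removing the arc $(u_k,u_{k-1})$ promotes $u_{k-1}$ to be the new front. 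Since both creation and destruction of clones act only at the front, and the most recently created clone is always the first destroyed, the list evolves exactly as a stack, completing the induction. The one point to check with care is that no other mechanism removes a clone arc; this holds because in two-parent hollow heaps a \decreasekey\ moves no children and adds only a single arc, and links only add arcs, so arcs disappear solely through destruction of their tail, precisely as the deletion procedure and Lemma~\ref{lem:4.2} guarantee.
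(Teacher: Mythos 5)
Your proof is correct and follows the same induction on heap operations as the paper's; where the paper simply asserts that ``the path can change only by destruction of its nodes, from front to back,'' you explicitly justify this by observing that each non-front clone $u_i$ retains its successor $u_{i+1}$ as a parent (so cannot become a root, hence cannot be destroyed) until $u_{i+1}$ itself is destroyed, since arcs disappear only when their tail is destroyed. This is a faithful elaboration of the intended argument rather than a different route.
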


\begin{proof}
The proof is by induction on the number of heap operations.  Insertion creates a new path of clones consisting of one full node.  A \decreasekey\ that moves an item from a node~$u$ to a node~$v$ makes~$u$ hollow and adds~$v$ to the front of the path of clones of~$u$.  A \hdelete\ or \deletemin\ begins by removing an item from a node that is first on a path of its clones, making the node hollow.  Subsequently, the path can change only by destruction of its nodes, from front to back.
%
\end{proof}

Now we are ready to describe the virtual movement of children and present a rank invariant for two-parent hollow heaps.  We give nodes \emph{virtual parents} as follows: When a node is created, it has no virtual parent.  When a root with no virtual parent loses a ranked link, its new parent becomes its virtual parent.  When the virtual parent of a node is destroyed, it loses its virtual parent.  (If the node is full, it can acquire a new virtual parent later, by losing a link.)  When a \decreasekey\ moves an item from a node~$u$ to a node~$v$, we make~$v$ the virtual parent of all the nodes whose virtual parent is~$u$ and whose rank is less than $v.rank = \max\{0, u.rank - 2\}$.  A node is a \emph{virtual child} of its virtual parent.

\begin{lemma}\label{lem:4.4}
A root has no virtual parent.
\end{lemma}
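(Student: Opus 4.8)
The plan is to prove Lemma~\ref{lem:4.4} by tracking how the virtual-parent relation evolves and showing that a root can never be on the receiving end of any of the operations that assign a virtual parent. The key observation is that virtual parents are assigned in exactly two situations described just before the lemma: (i) when a root with no virtual parent loses a ranked link, it acquires its new (real) parent as virtual parent; and (ii) when a \decreasekey\ at node~$u$ reassigns the virtual children of~$u$ to the newly created node~$v$. I would argue that neither situation can give a root a virtual parent, and that the operations that \emph{remove} virtual parents (destruction of a virtual parent, or a node acquiring one only upon losing a link) never create a root with a virtual parent either.

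First I would proceed by induction on the number of heap operations, maintaining the invariant that every root has no virtual parent. For the base case, a newly created node (via \hinsert\ or as the node~$v$ in a \decreasekey) has no virtual parent by definition, so if it is a root the invariant holds. For the inductive step, I would examine each way a node can come to be a root, and each way a virtual parent is assigned, and check consistency. A node becomes a root only by being created as such, or when its (real) parent is destroyed; in the latter case, by Lemma~\ref{lem:4.3} destruction proceeds from the front of a clone path, and more to the point, when a node's virtual parent is destroyed it simultaneously \emph{loses} its virtual parent, so a node that becomes a root through destruction of its parent cannot retain a virtual parent.

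The two assignment rules then need to be checked against rootness. For rule (i), a virtual parent is assigned precisely at the moment a root \emph{loses} a ranked link and thereby \emph{stops being a root} (it gains a real parent, the winner); so the node receiving a virtual parent is no longer a root. For rule (ii), the nodes receiving~$v$ as virtual parent are the former virtual children of~$u$; by the inductive hypothesis these already had a virtual parent (namely~$u$), hence were not roots, and the \decreasekey\ does not turn any of them into a root. Thus no operation ever produces a root that has a virtual parent, which is exactly the claim.

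The main obstacle, and the step I would be most careful about, is the bookkeeping around destruction: I must confirm that the rule ``when the virtual parent of a node is destroyed, it loses its virtual parent'' fires in lockstep with the node becoming a root, so there is no transient state in which a freshly exposed root still carries a stale virtual parent. This requires arguing that a node's real parent being destroyed coincides with (or implies) its virtual parent being destroyed — which follows because a virtual parent is always a real parent (assigned in rule (i) as the real parent gained upon losing a ranked link), so destroying the node's current parent is the only way it becomes a root, and the explicit rule strips the virtual parent at that same moment. Once this coincidence is nailed down, the remaining cases are routine.
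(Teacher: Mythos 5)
Your overall strategy (induction on operations, maintaining the invariant that no root has a virtual parent, and reducing everything to the moment a node re-becomes a root) matches the paper's, and your treatment of the two assignment rules is fine. But the step you yourself flag as critical is resolved with a claim that is false: you assert that ``a virtual parent is always a real parent,'' so that destroying the real parent destroys the virtual parent in the same stroke. In a two-parent hollow heap this fails precisely because of rule (ii): when a \decreasekey\ moves an item from $u$ to a new node $v$, the children of $u$ are \emph{not} moved, so a node $w$ that lost a ranked link to $u$ keeps $u$ as its real parent while its virtual parent is reassigned to $v$ (and possibly to still later clones after further \decreasekey\ operations). From that point on $w$'s virtual parent is not a parent of $w$ at all, and your ``coincidence'' argument has nothing to stand on: destroying $w$'s real parent $u$ does not, by your stated rules, strip $w$ of a virtual parent that is some other node $v$.

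The missing idea is the one the paper supplies via Lemma~\ref{lem:4.3}: the virtual parent of $w$ is always a clone of $w$'s original ranked-link parent $u$ (it starts as $u$ and rule (ii) only ever replaces it by the next clone), the clones of $u$ form a stack destroyed from newest to oldest, and $u$ cannot be destroyed until every later clone of $u$ --- in particular $w$'s most recent virtual parent --- has already been destroyed. Hence by the time $u$ is destroyed and $w$ becomes a root, $w$ has already lost its virtual parent and, not having been a root in the interim, could not have acquired a new one. Without invoking this stack/ordering property of clone paths, the destruction case of your induction does not go through.
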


\begin{proof}
The proof is by induction on the number of heap operations.  A newly created node has no virtual parent.  Suppose a node~$w$ acquires a virtual parent by losing a \link\ to~$u$.  Subsequently the virtual parent of~$w$ can change, but only to a later clone of~$u$.  For~$w$ to become a root again, $u$ must be destroyed, but by Lemma~\ref{lem:4.3} this can only happen after all later clones of~$u$ are destroyed, including the most recent virtual parent of~$w$.  Hence when~$w$ becomes a root again, it has no virtual parent.
\end{proof}

The rank invariant for two-parent hollow heaps is:

\begin{quote}
 A node~$u$ of rank $r$ has exactly $r$ virtual children, of ranks $0, 1,\ldots, r - 1$, unless $r > 2$ and~$u$ was made hollow by a \decreasekey, in which case~$u$ has exactly two virtual children, of ranks $r - 2$ and $r - 1$.
\end{quote}

\begin{theorem}\label{thm:4.5}
Two-parent hollow heaps maintain the rank invariant.
\end{theorem}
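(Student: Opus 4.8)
The plan is to prove the invariant by induction on the number of heap operations, verifying that each operation preserves it. The key observation is that only three kinds of events alter the virtual-parent/virtual-child structure: a ranked \link\ (which gives the loser a virtual parent), a \decreasekey\ (which reassigns virtual children from $u$ to the new clone $v$), and the destruction of a node (which orphans its virtual children). The operations \makeheap, \findmin, and \meld, as well as every unranked \link, change no ranks and create no virtual-parent relationships, so they preserve the invariant trivially; an \hinsert\ simply adds a rank-$0$ node with no virtual children. Thus it suffices to treat ranked links, \decreasekey, and node destruction.

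For a ranked \link\ of two full roots $a$ and $b$ of common rank $s$ (say $b$ loses), note that $b$, being a root, has no virtual parent by Lemma~\ref{lem:4.4}, so after the link $a$ becomes $b$'s virtual parent. By the induction hypothesis $a$, being a full root, satisfies the first clause and so has virtual children of ranks $0,\ldots,s-1$; it now gains $b$ (of rank $s$) and its rank becomes $s+1$, giving it virtual children of ranks $0,\ldots,s$, exactly as required. The invariants of all other nodes, including $b$, are untouched.

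The heart of the argument is \decreasekey, which moves an item from $u$ to a new node $v$ with $v.rank=\max\{0,u.rank-2\}$. Since $u$ is full immediately before the operation, it was never made hollow by a \decreasekey, so by the induction hypothesis it satisfies the first clause: writing $r=u.rank$, node $u$ has exactly $r$ virtual children of ranks $0,1,\ldots,r-1$. The reassignment rule makes $v$ the virtual parent of precisely those virtual children of $u$ whose rank is less than $v.rank$. If $r\le 2$ then $v.rank=0$, no virtual children move, $v$ correctly has none, and $u$ retains all $r$ of them (still matching the first clause, since $r\not>2$). If $r>2$ then $v.rank=r-2$, the $r-2$ virtual children of ranks $0,\ldots,r-3$ move to $v$ and the two of ranks $r-2,r-1$ stay with $u$; thus $v$ (full) satisfies the first clause for rank $r-2$, while $u$ (now hollow, made hollow by a \decreasekey, of rank $r>2$) satisfies the second clause. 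Note that $u$ keeps its rank and its own virtual parent, so it remains a virtual child of the same node (if any) at the same rank, leaving that node's invariant intact.

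The remaining case is the sequence of destructions during a \hdelete\ or \deletemin. First, removing an item makes its node hollow, but by a deletion rather than a \decreasekey, so the first clause continues to govern it, and since its virtual children are unchanged the invariant persists. When a hollow root $x$ is destroyed, its virtual children lose their virtual parent by the definition of the virtual structure; each such node keeps its own rank and its own virtual children, so its invariant survives. The one point to verify is that destroying $x$ does not silently remove some other node's virtual child: this is exactly where Lemma~\ref{lem:4.4} is needed, since $x$ is a root and hence has no virtual parent, so $x$ is no node's virtual child and its removal changes no surviving node's set of virtual children. Cascading destructions reduce to repeated application of this single step. The main obstacle is precisely this bookkeeping: confirming that the \decreasekey\ reassignment splits $u$'s virtual children so as to land exactly on the two clauses, and that destruction acts only locally on the virtual tree---both of which rest on the fact (Lemma~\ref{lem:4.4}) that roots have no virtual parents.
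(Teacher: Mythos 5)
Your proof is correct and follows the same route as the paper: an induction on the number of heap operations, invoking Lemma~\ref{lem:4.4} precisely where it is needed (so that the loser of a ranked link, being a root, acquires the winner as virtual parent, and so that destroying a root removes no surviving node's virtual child). The paper states this induction in one sentence; your write-up is simply the fully worked-out case analysis of that same argument.
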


\begin{proof}
The proof is an induction on the number of heap operations like the proofs of the corresponding parts of Theorems~\ref{thm:multi-correct} and~\ref{thm:eager-correct}, using Lemma~\ref{lem:4.4}: by the lemma, when a node acquires a child by a ranked link it becomes the virtual parent of that child.
\end{proof}

Since a node has at most one virtual parent at a time, the virtual parents define a set of rooted trees that we call \emph{virtual trees}.  A node~$w$ is a \emph{virtual descendant} of a node~$v$ if $w = v$ or~$w$ is a virtual descendant of a virtual child of~$v$.

\begin{lemma}\label{lem:lazy-fib} 
A node of rank $r$ in a two-parent hollow heap has at least $F_{r + 3} - 1$ virtual descendants, both full and hollow.
\end{lemma}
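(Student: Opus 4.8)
The plan is to prove the bound by induction on~$r$, following the proof of Lemma~\ref{lem:multi-fib} almost verbatim but replacing every occurrence of ``children'' and ``descendants'' by \emph{virtual} children and virtual descendants. The fact that legitimizes this substitution is that each node has at most one virtual parent at any time, so the virtual-parent relation defines a genuine forest of virtual trees, as observed immediately before the statement. Hence for any node~$v$ the sets of virtual descendants of its distinct virtual children are pairwise disjoint and exclude~$v$ itself, which is exactly what lets the counts be added with no double-counting.

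For the base cases, a rank-$0$ node has at least $F_3 - 1 = 1$ virtual descendant (itself) and a rank-$1$ node has at least $F_4 - 1 = 2$ (itself and its single rank-$0$ virtual child, which exists by the rank invariant). For the inductive step, let~$u$ have rank $r \ge 2$. The rank invariant (Theorem~\ref{thm:4.5}) guarantees that~$u$ has virtual children of ranks $r - 1$ and $r - 2$, and this holds whether~$u$ still has all~$r$ of its virtual children or only the two of ranks $r - 2$ and $r - 1$ in the case where it was made hollow by a \decreasekey. Counting~$u$ together with the virtual descendants of these two virtual children, and invoking the induction hypothesis together with disjointness, I obtain at least
\[
1 + (F_{r+2} - 1) + (F_{r+1} - 1) \;=\; F_{r+3} - 1
\]
virtual descendants, using $F_{r+3} = F_{r+2} + F_{r+1}$.

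The only point genuinely deserving care --- and the sole difference from the one-parent setting of Lemma~\ref{lem:multi-fib} --- is that the count must be taken in the virtual forest and not in the dag itself: in the dag a hollow node may have two parents, so ordinary descendant sets could overlap and the additive argument would fail. This is precisely the reason virtual parents were introduced into the analysis. A node acquires a virtual parent only upon losing a ranked link while it is a root with no virtual parent, so virtual parents are unique and the virtual trees behave exactly like the ordinary trees of Section~\ref{sec:multi-root}; Lemma~\ref{lem:4.4} further confirms that the roots of the virtual forest are roots of the dag. With this structure established, the identical Fibonacci recurrence carries through unchanged, and indeed the authors can simply refer back to the proof of Lemma~\ref{lem:multi-fib}.
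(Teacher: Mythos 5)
Your proof is correct and follows exactly the route the paper intends: the paper's own proof of Lemma~\ref{lem:lazy-fib} consists of the single sentence ``The proof is like that of Lemma~\ref{lem:multi-fib},'' i.e., the same induction on rank using the rank invariant, carried out in the virtual trees. Your added remark that the virtual-parent relation forms a forest (so the virtual descendant sets of distinct virtual children are disjoint and the counts add without double-counting) is precisely the point that makes the transfer from the one-parent argument legitimate, and it is a welcome explicit justification of what the paper leaves implicit.
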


\begin{proof}
The proof is like that of Lemma~\ref{lem:multi-fib}.
\end{proof}

\begin{corollary}\label{cor:lazy-rank} 
The rank of a node in a two-parent hollow heap of $N$ nodes is at most $\log_\phi N$.
\end{corollary}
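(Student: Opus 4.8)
The plan is to obtain this corollary as an immediate consequence of Lemma~\ref{lem:lazy-fib}, exactly mirroring how Corollary~\ref{cor:multi-rank} follows from Lemma~\ref{lem:multi-fib} and Corollary~\ref{cor:eager-rank} follows from Lemma~\ref{lem:eager-fib}. Fix a node~$u$ of rank~$r$ in a two-parent hollow heap of~$N$ nodes. By Lemma~\ref{lem:lazy-fib}, $u$ has at least $F_{r+3}-1$ virtual descendants, counting both full and hollow nodes.

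The one point I would make sure to pin down — which is really the only thing that distinguishes this case from the earlier two — is that these virtual descendants are genuinely $F_{r+3}-1$ \emph{distinct} nodes, all lying among the $N$ nodes of the heap. This is where the structure set up just before the statement does the work: the virtual parents define a set of rooted trees (the virtual trees), since each node has at most one virtual parent at a time. Hence the virtual descendants of~$u$ form the node set of a subtree of a virtual tree, so they are pairwise distinct, and they all belong to the same heap as~$u$. Consequently $F_{r+3}-1 \le N$.

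First I would record this counting step, and then combine it with the Fibonacci inequality already quoted in the text, namely $F_{r+3}-1 \ge F_{r+2} \ge \phi^{r}$ for $r \ge 0$. Chaining these gives $\phi^{r} \le F_{r+3}-1 \le N$, and taking logarithms to base~$\phi$ yields $r \le \log_\phi N$, as claimed.

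I do not expect any real obstacle here: all the analytic content has already been absorbed into Lemma~\ref{lem:lazy-fib}, whose proof was deferred to the argument of Lemma~\ref{lem:multi-fib}. The only thing needing a moment's care is the transition from ``at least $F_{r+3}-1$ virtual descendants'' to ``$F_{r+3}-1$ is at most the total node count~$N$,'' which is justified by the fact that the virtual-parent relation is a forest (each node has a unique virtual parent), guaranteeing distinctness of the counted nodes. Everything else is the same one-line computation used for the two preceding corollaries.
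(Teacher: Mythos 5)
Your proof is correct and follows exactly the paper's route: Corollary~\ref{cor:lazy-rank} is deduced from Lemma~\ref{lem:lazy-fib} via the inequality $F_{r+3}-1 \ge F_{r+2} \ge \phi^r$, just as Corollary~\ref{cor:multi-rank} is deduced from Lemma~\ref{lem:multi-fib}. The extra remark that virtual descendants are distinct nodes of the heap (because the virtual-parent relation forms a forest) is a point the paper leaves implicit, and making it explicit is a reasonable refinement rather than a departure.
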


\begin{proof}
The proof is the same as that of Corollary~\ref{cor:multi-rank}, using Lemma~\ref{lem:lazy-fib}.
\end{proof}

A straightforward way to implement two-parent hollow heaps is to store each set of children in an exogenous singly-linked list.  The order of children does not matter.  A \link\ makes the loser the first child of the winner.  The lists must be exogenous rather than endogenous since nodes can be on two lists at the same time.  We give each node a bit that is true if it has two parents.  The bit is initially false.  It becomes true when a \decreasekey\ makes the node hollow, unless the node is currently the root.  When a node is destroyed, each of its children with a false bit becomes a root; each of its children with a true bit has it set to false.  In Section~\ref{sec:lazy-implementation} we describe alternative implementations that use less space.

\begin{theorem}\label{thm:lazy-analysis} 
The amortized time per two-parent hollow heap operation is $O(1)$ for each operation other than a \hdelete\ or \deletemin, and $O(\log N)$ per \hdelete\ or \deletemin\ on a heap of $N$ nodes.
\end{theorem}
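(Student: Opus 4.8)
The plan is to follow the proof of Theorem~\ref{thm:eager-analysis} closely, but to base the potential on the \emph{virtual} tree structure (virtual parents and children, as defined just before Theorem~\ref{thm:4.5}) rather than on the actual parents and children. This substitution is forced on us: in a two-parent hollow heap a \decreasekey\ does not move any children of~$u$, so a potential that charged for actual children would jump by one for \emph{every} full child of~$u$ when~$u$ is made hollow, and~$u$ may have many. The virtual structure instead mimics exactly the child movement of the one-parent algorithm---by the definition of virtual parents together with Lemma~\ref{lem:4.4} and Theorem~\ref{thm:4.5}---so only a bounded number of virtual children stay with~$u$. As in the earlier analyses, I first record that every operation takes $O(1)$ worst-case time except a \hdelete\ or \deletemin\ that removes the item in the minimum node, whose cost is $O(1)$ per destroyed node, plus $O(1)$ per \link, plus $O(\log N)$; here the $O(\log N)$ term bounds the melds (unranked links), since after the ranked links at most one tree per rank remains, at most $\log_\phi N + 1$ in all by Corollary~\ref{cor:lazy-rank}. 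The $O(1)$-per-destroyed-node work I charge, exactly as in Theorem~\ref{thm:multi-analysis}, to the creation of that node (one per \hinsert\ and one per \decreasekey), so it contributes $O(1)$ amortized there; it then remains only to bound the number of links.

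For this I would use the potential in which each full node carries one unit \emph{unless} it is a virtual child of a full node, in which case it carries zero, while hollow nodes carry zero. By Lemma~\ref{lem:4.4} a full root has no virtual parent and hence always carries one unit. I would then compute the amortized cost (links plus potential increase) operation by operation. A \hinsert\ creates one full root, cost~$1$; a \findmin\ costs~$0$; a \meld\ does one unranked link whose loser is a full root that acquires \emph{no} virtual parent (only ranked links install virtual parents), so the potential is unchanged and the cost is~$1$. For a \decreasekey\ moving the item from~$u$ to~$v$: the new node~$v$ is a full root ($+1$); every virtual child of~$u$ of rank less than $v.rank$ becomes a virtual child of the full node~$v$, so its potential stays zero; and making~$u$ hollow raises by one the potential of each full virtual child that~$u$ \emph{retains}, of which by Theorem~\ref{thm:4.5} there are at most two (ranks $u.rank-2$ and $u.rank-1$). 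Hence a \decreasekey\ has amortized cost at most three, and none of these operations does a \link.

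All links occur in a \hdelete\ or \deletemin. A ranked link makes its full loser a virtual child of its full winner (Theorem~\ref{thm:4.5}), dropping the loser's potential from one to zero, so its amortized cost is $1+(-1)=0$; an unranked link installs no virtual parent, leaves the potential unchanged, and costs~$1$, and there are $O(\log N)$ of these. Removing the item from the minimum node turns that full root hollow, and its full virtual children---at most $r\le\log_\phi N$ of them by Theorem~\ref{thm:4.5} and Corollary~\ref{cor:lazy-rank}---each gain one unit. Finally, destroying a hollow root changes the potential of no full node: by Lemma~\ref{lem:4.4} any full child that becomes a root has no virtual parent at that moment, so it carries one unit both just before and just after. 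Summing, a \hdelete\ or \deletemin\ has amortized cost $O(\log N)$, which with the separately charged destruction work gives the theorem.

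The step I expect to be the main obstacle is not any single calculation but verifying that the virtual bookkeeping lines up with the potential accounting at exactly the two points where it must: that a ranked link really installs the winner as the \emph{virtual} parent of the loser (so the $-1$ is available to cancel the link), and that after a \decreasekey\ only $O(1)$ full virtual children remain with~$u$ (so the operation stays $O(1)$). Both rest on Lemma~\ref{lem:4.4} and Theorem~\ref{thm:4.5}. For the worst-case bound I would also double-check that the two-parent bit bookkeeping keeps the destruction of hollow roots to $O(H+T)$ even though a hollow node may have many children, by noting (using Lemma~\ref{lem:4.3}) that each parent--child incidence removed either produces a distinct new root, counted among the $H$ destroyed or the $T$ remaining, or is the at most one first-parent removal of a two-parent node.
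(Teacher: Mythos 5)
Your proposal is correct and follows essentially the same route as the paper's proof: the same potential (one unit per full node unless it is a full virtual child of a full node), the same charging of node destructions to the corresponding \hinsert\ and \decreasekey\ operations, and the same accounting for a hollow node that loses a parent without being destroyed, charged once to the \decreasekey\ that gave it its second parent. You simply spell out in more detail the per-operation bookkeeping that the paper leaves implicit.
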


\begin{proof} The proof is like the proof of Theorem~\ref{thm:multi-analysis} but with two changes.  A hollow node can lose a parent without becoming a root and being destroyed (because it has another parent).  We charge each such event to the \decreasekey\ that gave the hollow node its second parent, at most one per \decreasekey.  To count ranked links, we use a slightly different potential function: each full node has one unit of potential unless it is a full \emph{virtual} child of a full node.  A ranked link reduces the potential by one, giving it an amortized cost of zero.  The unranked links done in melds do not change the potential.  Making a node hollow by removing its item increases the potential by at most $\log_\phi N$, since only virtual children increase in potential.
\end{proof}

We conclude this section by comparing two-parent and one-parent hollow heaps.  Consider a \decreasekey\ that moves an item from $u$ to a new node $v$.  Eventually one of $u$ and $v$ may be destroyed, but the one destroyed first depends on future heap operations, which are not known in advance.  All children of the node destroyed become roots, and must themselves be destroyed (if hollow) or melded (if full).  In one-parent hollow heaps, no matter how the children of $u$ are distributed between $u$ and $v$ when the \decreasekey\ occurs, in the worst case at least half of them will become roots when the first of $u$ and $v$ is destroyed.  In a two-parent hollow heap, on the other hand, none of the children of $u$ become roots until both $u$ and $v$ are destroyed.  Thus not only do two-parent hollow heaps save work by not moving children, they delay the next access to these children by postponing when they become roots.  We leave as an open question whether this local advantage translates into a global advantage.

\section{Rebuilding}\label{sec:rebuilding}

The number of nodes $N$ in a hollow heap is at most the number of insertions plus the number of \decreasekey\ operations on items that were ever in the heap or in heaps melded into it.  If the number of \decreasekey\ operations is polynomial in the number of insertions, and only items in minimum nodes are deleted, then $\log N = O(\log n)$, where $n$ is the number of heap items, so the amortized time per \hdelete\ or \deletemin\ is $O(\log n)$, the same as for Fibonacci heaps.  In applications in which the storage required for the problem input is at least linear in the number of heap operations, the extra space needed for hollow nodes is at most linear in the problem size.  Both of these conditions hold for the heaps used in many graph algorithms, including Dijkstra's shortest path algorithm \cite{Di59,FrTa87}, various minimum spanning tree algorithms \cite{Di59,FrTa87,GaGaSpTa86,Prim57}, and Edmonds' optimum branching algorithm \cite{Edmonds67,GaGaSpTa86}.  In these applications there is at most one \hinsert\ and one \deletemin\ per vertex, no \hdelete\ operations other than those triggered by \deletemin\ operations, and at most two \decreasekey\ operations per edge or arc.  Furthermore the number of edges or arcs is at most quadratic in the number of vertices.  In these applications hollow heaps are asymptotically as time-efficient as Fibonacci heaps, although they need space linear in the number of edges or arcs, whereas Fibonacci heaps only need space linear in the number of vertices.  Whether the space advantage of Fibonacci heaps is significant depends on the graph density and on implementation details.

For applications in which the number of \decreasekey\ operations is huge compared to the number of insertions, or the extra space needed by hollow nodes becomes a bottleneck, we can use periodic rebuilding to guarantee that $N = O(n)$ for every heap.  To do this, keep track of $N$ and $n$ for every heap.  When $N > cn$ for a suitable constant $c > 1$, eliminate all hollow nodes by rebuilding the heap.

We offer two ways to do the rebuilding.  The first is to completely disassemble the heap and build a new one containing only the full nodes, as follows: Destroy every hollow node.  Make each full node into a one-node heap whose node has rank $0$.  Do repeated melds until one heap remains.   A second method that does no key comparisons is to give all the full nodes a rank of $0$ and “contract” all the hollow nodes, as follows: In a two-parent hollow heap, eliminate one parent of every node that has two, making the dag a tree (or, in the multi-root case, a forest).  Give each full node a rank of zero and give each full child a parent equal to its nearest full proper ancestor.  Destroy all the hollow nodes.  To extend the analysis in Sections~\ref{sec:multi-root}-\ref{sec:lazy} to cover the second rebuilding method, we define every child to be unranked after the rebuilding.  Either way of rebuilding can be done in a single traversal of the dag (or tree or set of trees), taking $O(N)$ time.  Since $N > cn$ and $c > 1$, $O(N) = O(N - n)$.  That is, the rebuilding time is $O(1)$ per hollow node destroyed.  By charging the rebuilding time to the heap operations that created the hollow nodes, $O(1)$ per operation, we obtain the following theorem:

\begin{theorem}\label{thm:rebuilding} 
With rebuilding, the amortized time per hollow heap operation is $O(1)$ for each operation other than a \deletemin\ or \hdelete, and $O(\log n)$ per \deletemin\ or \hdelete\ on a heap of $n$ items.  The space required by a heap is $O(n)$.  These bounds hold for both two-parent and one-parent hollow heaps.
\end{theorem}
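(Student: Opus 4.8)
The plan is to maintain the invariant $N = O(n)$ by rebuilding and then to reuse the amortized analyses of the earlier sections almost verbatim, paying separately for the rebuilding work. Since a rebuild is performed whenever $N > cn$, and since each $\hinsert$ or $\decreasekey$ creates at most one node while no other operation increases $N$, the quantity $N$ grows by at most one per operation, and a rebuild resets it to $n$ (only the full nodes survive). Hence $N \le cn + 1 = O(n)$ at all times. This gives the space bound $O(N) = O(n)$ at once, and, via Corollaries~\ref{cor:multi-rank}, \ref{cor:eager-rank}, and~\ref{cor:lazy-rank}, turns the $O(\log N)$ bound of Theorems~\ref{thm:multi-analysis}, \ref{thm:eager-analysis}, and~\ref{thm:lazy-analysis} into $O(\log n)$. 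So, ignoring the rebuilds themselves, the amortized cost charged to the heap operations is already $O(1)$ per operation other than $\hdelete$ or $\deletemin$, and $O(\log n)$ per $\hdelete$ or $\deletemin$.

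Next I would account for the rebuilds. Each rebuild runs in $O(N)$ time (a single traversal). Because a rebuild is triggered only when $N > cn$, we have $N - n > (c-1)n$, hence $N < \frac{c}{c-1}(N-n)$, so $O(N) = O(N-n)$: a rebuild costs $O(1)$ per hollow node it destroys. Every node is created by a unique $\hinsert$ or $\decreasekey$ (the only node-creating operations), and each node is destroyed at most once over the whole sequence; charging each destruction back to its creating operation therefore adds only $O(1)$ amortized to each $\hinsert$ and $\decreasekey$.

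The step I expect to be the main obstacle is verifying that splicing rebuilds into the operation sequence does not break the potential argument that counts ranked links in Theorems~\ref{thm:multi-analysis}--\ref{thm:lazy-analysis}. A rebuild assigns rank $0$ to every full node and makes every child unranked (or, for the disassembly method, rebuilds by melds), so no surviving full node is a full (virtual) child of a full node; consequently the potential rises to at most the number of full nodes, namely $n$. The earlier bound on ranked links relied on potential being deposited only by ordinary operations, so this rise must be paid for. The resolution reuses the slack above: the increase is at most $n$, and since a rebuild happens only when $N > cn$ we have $n < \frac{1}{c-1}(N-n)$, so the potential increase is also $O(N-n)$, i.e.\ $O(1)$ per destroyed hollow node, and is charged to the creating operations exactly as the running time was. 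The unranked links that the disassembly method may perform do not decrease the potential and are already covered by the $O(N)$ rebuild bound. With the potential accounted for, the total number of links over the whole sequence remains $O(1)$ per operation plus $O(\log n)$ per $\hdelete$ or $\deletemin$; summing the $O(1)$-per-operation term, the $O(1)$-per-destruction term, the $O(\log n)$-per-delete term, and the rebuilding charge yields the stated bounds, uniformly for the multi-root, one-root, and two-parent variants.
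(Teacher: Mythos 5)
Your proposal is correct and follows essentially the same route as the paper: maintain $N = O(n)$ by rebuilding whenever $N > cn$, observe that the $O(N)$ rebuild cost is $O(N-n)$ because $c > 1$ and hence chargeable at $O(1)$ to the $\hinsert$ and $\decreasekey$ operations that created the destroyed hollow nodes, and inherit Theorems~\ref{thm:multi-analysis}, \ref{thm:eager-analysis}, and~\ref{thm:lazy-analysis} with $\log N$ replaced by $\log n$. Your final paragraph, checking that the potential jump to at most $n$ caused by resetting all ranks to zero (or by re-melding) is itself $O(N-n)$ and so absorbable by the same charging scheme, makes explicit a point the paper leaves implicit rather than taking a different approach.
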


By making $c$ sufficiently large, we can arbitrarily reduce the rebuilding overhead, at a constant factor cost in space and an additive constant cost in the amortized time of \hdelete.  Whether rebuilding is actually a good idea in any particular application is a question to be answered by experiments.

With rebuilding as described above, the worst-case time of a \decreasekey\ operation is no longer $O(1)$, since such a \decreasekey\ can trigger a rebuilding.  (The amortized time remains $O(1)$.)  One can preserve the $O(1)$ worst-case time of \decreasekey\ operations by doing \emph{incremental} rebuilding, although the details get a bit complicated.  A simpler alternative is as follows:  Use a multi-root version of hollow heaps (either one-parent or two-parent).  During a \decreasekey, if the item whose key decreases is in a root, decrease the key of this node and update the minimum node, instead of moving the item to a new node.  Rebuild when a \hdelete\ or \deletemin\ causes $N > cn$, where $c > 1$.  With this method, at most one hollow node per item is created between consecutive \hdelete\ or \deletemin\ operations.  Thus if $N \le cn$ just after one such operation, $N \le (c + 1)n$ just before the next one.  By choosing $c$ appropriately, one can guarantee that the fraction of nodes that are hollow is always at most $\frac{1}{2}-\varepsilon$, for any fixed $\varepsilon > 0$.

A natural question to ask is whether hollow nodes can be entirely eliminated from the data structure.  The answer is yes, at least for one-parent hollow heaps.  The idea is to ``contract'' hollow nodes as soon as they are created.  We shall describe the application of this idea to the multi-root hollow heaps of Section~\ref{sec:multi-root}, modified as described there to keep all roots full.  In such a hollow heap, let $u$ be a hollow child with full parent $v$.  Let $T$ be the maximal subtree of hollow nodes rooted at $u$, and let $C$ be the list of (full) children of the leaves of $T$, in symmetric order (as defined by the order of nodes in lists of children).  In the corresponding contracted structure, $T$ is deleted and $u$ is replaced  as a child of $v$ by list $C$, which becomes a sublist of children of $v$. 
For this idea to work, we need to keep track of such sublists of children, since it is these sublists, rather than individual children, that are moved during \decreasekey\ operations.

Here are the (somewhat tedious) details.  As in Section~\ref{sec:multi-root}, maintain each list of children in decreasing order by rank, but partition each such list into sublists of consecutive children.  The behavior of the algorithm defines these sublists.  When a ranked link makes the loser~$u$ the first child of the winner~$v$, $u$ becomes the only member of a singleton sublist that is first among the sublists of children of~$v$; the remaining sublists of children of~$v$ remain the same.

A \decreasekey\ does not create a new node but merely changes the key of the old node holding the item and moves appropriate sublists of children.  To do a \decreasekey\ on the item in a node~$u$, update the key of~$u$.  If~$u$ is a root, this completes the \decreasekey.  If not, let~$v$ be the parent of~$u$ and~$A$ the sublist of children of~$v$ containing~$u$.  Split~$A$ into two sublists~$A'$ and~$A''$, with~$u$ first on~$A''$.  Sublist~$A'$ may be empty; $A''$ may contain only~$u$.  Delete~$u$ from $A''$.  Form sublist~$B$ as follows: if $u.rank = 0$, let~$B$ be the empty list; if $u.rank = 1$, remove from~$u$ its first sublist of children and let this sublist be~$B$; if $u.rank > 1$, remove from~$u$ its first two sublists of children and catenate them to form~$B$.  Set $v.rank = \max \{0, u.rank - 2 \}$.  Replace~$A$ in the list of sublists of children of~$v$ by the catenation of~$A'$, $B$, and~$A''$ in this order.  Add $u$ (with its remaining sublists of children) to the list of roots, and update the minimum node.

We call this data structure the {\rm contracted hollow heap}.  The analysis of Section~\ref{sec:multi-root} extends to give the following theorem:

\begin{theorem}  The amortized time per contracted hollow heap operation is $O(1)$ for each operation other than a \deletemin\ or $\hdelete$, and $O(\log N)$ per \deletemin\ or \hdelete\ on a heap formed by $N$ \hinsert\ and \decreasekey\ operations and any number of \meld\ operations.
\end{theorem}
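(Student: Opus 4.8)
The plan is to prove the theorem by reducing it to the analysis of Section~\ref{sec:multi-root}, exploiting the fact that a contracted hollow heap is, by construction, the \emph{contraction} of a multi-root hollow heap of the kind analyzed there (the version modified to keep all roots full). First I would set up this correspondence precisely: the contraction map sends each hollow child~$u$ of a full parent~$v$ to the list~$C$ of full children of the leaves of the maximal hollow subtree rooted at~$u$, making~$C$ a sublist of children of~$v$, and it leaves full nodes and their ranks untouched. I would then prove by induction on the number of operations that every contracted heap operation produces exactly the contraction of the result of the corresponding multi-root operation. The delicate case is \decreasekey: I would check that reusing~$u$ as the node that retains the item—giving it rank $\max\{0, u.\rank-2\}$ and making it a root—together with splicing the sublist~$B$ (the first one or two sublists of children of~$u$) into $v$'s child list in place of~$u$ via the three-way catenation of~$A'$, $B$, and~$A''$, realizes exactly the contraction of the uncontracted step in which~$u$ becomes hollow, keeps its top two children, and its clone becomes a new root. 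This induction simultaneously yields correctness, heap-order, and the (uncontracted) rank invariant.

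Granting the correspondence, the rank bound is immediate: by Lemma~\ref{lem:multi-fib} and Corollary~\ref{cor:multi-rank} applied to the corresponding uncontracted heap, every node of rank~$r$ has at least $F_{r+3}-1 \ge \phi^r$ descendants, full and hollow, in that heap. Since each \hinsert\ creates one node and each \decreasekey\ creates at most one node in the uncontracted heap, the current number of nodes never exceeds~$N$, so $r \le \log_\phi N$; and because contraction preserves full nodes and their ranks, the same bound holds in the contracted heap. For worst-case per-operation times, \makeheap, \findmin, \hinsert, and \meld\ are $O(1)$ as in Section~\ref{sec:multi-root}, and \decreasekey\ is $O(1)$ provided the split of~$A$, the extraction of~$B$, and the catenation of~$A'$, $B$, $A''$ are each done in constant time, which is what the linked-sublist representation is designed to support. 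A \deletemin\ or \hdelete\ that empties the minimum node destroys that single now-hollow root, turns its children into roots, does ranked links, and locates links via a rank-indexed array; it therefore costs $O(1)$ per new root plus $O(1)$ per \link\ plus $O(\log N)$—strictly less than its uncontracted counterpart, which additionally destroys the intermediate hollow nodes.

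To bound the number of links I would reuse the potential argument of Theorem~\ref{thm:multi-analysis} essentially verbatim, defining the potential of a contracted heap to be the potential of the corresponding uncontracted heap: a full node contributes one unit unless it is a full child of a full node (in the uncontracted structure), in which case it contributes zero. Under the correspondence the potential changes exactly as in Theorem~\ref{thm:multi-analysis}: $+1$ per \hinsert, $0$ per \findmin\ and \meld, at most $+3$ per \decreasekey, each \link\ has amortized cost zero, and emptying the minimum node raises the potential by at most its number of full children, hence at most its rank, i.e.\ $O(\log N)$. Charging the (now fewer) node destructions to the \hinsert\ and \decreasekey\ operations that created the corresponding nodes then gives the stated amortized bounds.

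The hard part will be the sublist bookkeeping and the correct transfer of the potential. The subtlety is that, unlike in a one-parent heap, a node in the contracted heap can have far more children than its rank, since each contracted hollow node donates a whole sublist of children to its nearest full ancestor; a naive reading would suggest that emptying such a node could raise the potential by much more than its rank. The resolution, which the correspondence makes precise, is that every child donated through contraction is the image of a child of a hollow node in the uncontracted heap and therefore \emph{already} carries one unit of potential—paid for at the earlier \decreasekey\ that created that hollow node—so emptying the minimum node raises the potential only by its genuine, rank-many full children. Making this accounting rigorous, namely verifying that the sublist operations maintain the correspondence node-for-node and that the potential defined through it is exactly the quantity a \deletemin\ charges against, is where the real work lies, even though every ingredient is already present in the analysis of Section~\ref{sec:multi-root}.
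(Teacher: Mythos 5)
Your proposal is correct and follows exactly the route the paper intends: the paper offers no written proof beyond the remark that ``the analysis of Section~\ref{sec:multi-root} extends,'' and your contraction correspondence with the all-roots-full multi-root heap, the transferred rank bound of $\log_\phi N$ (correctly $N$ and not $n$, since the Fibonacci count includes the contracted-away hollow descendants), and the transported potential function are precisely that extension spelled out. Your observation that sublist-donated children already carry their unit of potential, so that emptying the minimum node still costs only $O(\log N)$, is the one genuinely non-obvious point in making the reduction rigorous, and you have it right.
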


It is straightforward to extend contraction to one-root one-parent hollow heaps, but we do not know how to extend it to two-parent hollow heaps in a nice way: contraction can cause a node to have an arbitrarily large number of parents.

Contracted hollow heaps contain no hollow nodes, but they have at least two significant drawbacks.  The time per \hdelete\ is $O(\log N)$, not $O(\log n)$: to obtain the latter, some form of additional rebuilding seems necessary.  The number of pointers per node to implement the structure is large (exceeding that of Fibonacci heaps and rank-pairing heaps), although the structure can be made endogenous, saving some pointers.  Thus we view contracted hollow heaps as primarily of theoretical interest.

\section{Implementation of two-parent hollow heaps}\label{sec:lazy-implementation}
As mentioned in Sections \ref{sec:multi-root} and \ref{sec:one-root}, one can implement one-parent hollow heaps (either multi-root or one-root) using a pointer per item (to the node holding it) and a rank and three pointers per node (to the first child, next sibling, and the item it holds).  The implementation of two-parent hollow heaps uses more space, since each list of children must be exogenous rather than endogenous.  In this section we describe implementations of two-parent hollow heaps that represent the lists of children endogenously and thereby reduce the space needed.

If~$v$ is a parent of~$u$, we say that~$v$ is the \emph{first} or \emph{second parent} of~$u$ if~$u$ acquired parent~$v$ via a \link\ or a $\decreasekey$, respectively.  Only a hollow node can have two parents; it can lose them in either order.

The way we have defined links to work in Section \ref{sec:lazy} is the key to making the lists of children endogenous.  A link makes the loser the \emph{first} child of the winner.  This guarantees that a node~$u$ with two parents is always the \emph{last} child of its second parent~$v$: when~$u$ becomes a child of~$v$,~$u$ is the only child of~$v$, and any children that~$v$ later acquires are added in front of~$u$ on the list of children of~$v$.

This allows us to use two pointers per node to represent lists of children, as in one-parent hollow heaps: if~$u$ is a node, $u.child$ is the first child of~$u$, $\NULL$ if~$u$ has no children; if~$u$ is a child, $u.next$ is the next sibling of~$u$ on the list of children of its first parent, $\NULL$ if it is the last child of its first parent.

With this representation, given a child~$u$ of a node~$v$, we need ways to answer three questions: (i) Is~$u$ last on the list of children of~$v$? (ii) Does~$u$ have two parents? (iii) Assuming~$u$ has two parents, is~$v$ the first or the second?  There are several methods that allow these questions to be answered in $O(1)$ time.  We provide a detailed implementation of two-parent hollow heaps using one method and discuss alternatives below.

Each node~$u$ has a pointer $u.item$ to the item in~$u$ if~$u$ is full, $\NULL$ if~$u$ is hollow.  Each node~$u$ has another pointer $u.ep$ (for “extra parent”) that is the second parent of~$u$ if~$u$ has two parents, $\NULL$ if~$u$ has at most one.  In particular, $u.ep = \NULL$ if~$u$ is full.  As an optimization, a \decreasekey\ on the item in the minimum node does not create a new node but merely changes the key of the minimum node.  A decrease-key on an item not in the minimum node makes the newly hollow node~$u$ a child of the new full node~$v$ by setting $v.child = u$ and $u.ep = v$ but \emph{not} changing $u.next$: $u.next$ is the next sibling of~$u$ on the list of children of the first parent of~$u$.

We answer the three questions as follows: (i) A child~$u$ of~$v$ is last on the list of children of~$v$ if and only if $u.next = \NULL$ ($u$ is last on any list of children containing it) or $u.ep = v$ ($u$ is hollow with two parents and~$v$ is its second parent); (ii) $u$ has two parents if and only if $u.ep \ne \NULL$; and (iii) Assuming~$u$ has two parents, $v$ is the second if and only if $v = u.ep$.

Each node~$u$ also stores its key and rank, and each item~$e$ stores the node $e.node$ holding it. The total space needed is four pointers, a key, and a rank per node, and one pointer per item.  Ranks are small integers, each requiring $\lg\lg N + O(1)$ bits of space.

Implementation of all the heap operations except \hdelete\ is straightforward.  Figure~\ref{fig:5.1} gives such implementations in pseudocode; Figure~\ref{fig:5.2} gives implementations of auxiliary methods used in  Figure~\ref{fig:5.1}.

\newcommand{\MAKEHEAP}{
\parbox[t]{2.5in}{
\begin{mytabbing}
aaa\=aaa\=aaa \kill
$\makeheap()$: \\
\>      \RETURN\ $\NULL$
\end{mytabbing}
}}

\newcommand{\FINDMIN}{
\parbox[t]{2.5in}{
\begin{mytabbing}
aaa\=aaa\=aaa \kill
$\findmin(h)$: \\
\>     \IF\ $h = \NULL$: \RETURN\ $\NULL$ \\
\>     \ELSE: \RETURN\ $h.\ITEM$
\end{mytabbing}
}}

\newcommand{\INSERT}{
\parbox[t]{2.5in}{
\begin{mytabbing}
aaa\=aaa\=aaa \kill
$\hinsert(e, k, h)$: \\
\>      \RETURN\ $meld(\makenode(e,k), h)$
\end{mytabbing}
}}

\newcommand{\MELD}{
\parbox[t]{2.5in}{
\begin{mytabbing}
aaa\=aaa\=aaa \kill
$\meld(g, h)$: \\
$\{$ \>    \IF\ $g = \NULL$: \RETURN\ $h$ \\
\>      \IF\ $h = \NULL$: \RETURN\ $g$ \\
\>      \RETURN\ $link(g, h)$   $\;$ $\}$
\end{mytabbing}
}}

\newcommand{\DK}{
\parbox[t]{2.5in}{
\begin{mytabbing}
aaa\=aaa\=aaa \kill
$\decreasekey(e, k, h):$ \\
$\{$  \>      $u = e.\node$   \\
\>       \IF\ $u = h$: \\
\>        $\{$ \> $u.\key = k$ \\
\>             \> \RETURN\ $h$ $\}$ \\
\>       $v = \makenode(e,k)$ \\
\>       $u.\ITEM = \NULL$ \\
\>       \IF\ $u.\rank > 2$: $v.\rank = u.\rank - 2$ \\
\>       $v.\child = u$ \\
\>       $u.ep = v$ \\
\>       \RETURN\ $\link(v, h)$   $\;$ $\}$
\end{mytabbing}
}}

\newcommand{\DELETEMIN}{
\parbox[t]{2.5in}{
\begin{mytabbing}
aaa\=aaa\=aaa \kill
$\deletemin(h)$: \\
\>      \RETURN\ $\hdelete(h.\ITEM, h)$
\end{mytabbing}
}}

\newcommand{\MAKENODE}{
\parbox[t]{2.5in}{
\begin{mytabbing}
aaa\=aaa\=aaa \kill
$\makenode(e,k)$: \\
$\{$ \>     $u = \newnode( )$ \\
\>       $u.\ITEM = e$    \\
\>       $e.\node = u$     \\
\>       $u.\child = \NULL$ \\
\>       $u.\next = \NULL$ \\
\>       $u.ep = \NULL$ \\
\>       $u.key = k$  \\
\>       $u.\rank = 0$  \\
\>       \RETURN\ $u$ $\;$    $\}$ \\
\end{mytabbing}
}}

\newcommand{\LINK}{
\parbox[t]{2.5in}{
\begin{mytabbing}
aaa\=aaa\=aaa \kill
$\link(v, w)$: \\
\>      \IF\ $v.\key \ge w.\key$: \\
\>      $\{$  \>  $\addchild(v, w)$       \\
\>            \> \RETURN\ $w$    $\}$     \\
\>      \ELSE:                             \\
\>      $\{$  \>  $\addchild(w, v)$           \\
\>            \> \RETURN\ $v$    $\}$
\end{mytabbing}
}}

\newcommand{\ADDCHILD}{
\parbox[t]{2.5in}{
\begin{mytabbing}
aaa\=aaa\=aaa \kill
$\addchild(v, w)$: \\
$\{$  \>    $v.\next = w.\child$ \\
       \>   $w.\child = v$  $\;$   $\}$
\end{mytabbing}
}}

\begin{figure}[htbp]
\begin{center}
\parbox[t]{2.3in}{
\MAKEHEAP\\
\INSERT\\
\MELD\\
\FINDMIN
}
\parbox[t]{2.3in}{
\DK\\
\DELETEMIN
}
\end{center}
\caption{Implementations of all two-parent hollow heap operations except \hdelete.}
\label{fig:5.1}
\end{figure}

\begin{figure}[htbp]
\begin{center}
\parbox[t]{2.3in}{
\MAKENODE
}
\parbox[t]{2.3in}{
\LINK\\
\ADDCHILD
}
\end{center}
\caption{Implementations of auxiliary methods used in Figure~\ref{fig:5.1}.}
\label{fig:5.2}
\end{figure}

Implementation of \hdelete\ requires keeping track of roots as they are destroyed and linked. To do this, we maintain a list $L$ of hollow roots, singly linked by \next\ pointers.  We also maintain an array $A$ of full roots, indexed by rank, at most one per rank.

\begin{figure}[htbp]
\begin{center}
\parbox[t]{5in}{
\begin{mytabbing}
aaa\=aaa\=aaa\=aaa\=aaa\=aaa\=aaa \kill
$\hdelete(e, h)$: \\
$\{$ \> $e.\node.\ITEM = \NULL$ \\
\>      $e.\node = \NULL$ \\
\>      if $h.\ITEM\not= \NULL$: \RETURN\ $h$ \hspace{1cm} /* Non-minimum deletion */ \\
\>      $\maxrank = 0$ \\
\>      \WHILE\ $h\not= \NULL$: \hspace{1cm} /* While $L$ not empty */ \\
\>      $\{$ \>    $w = h.\child$ \\
\>           \>    $v = h$     \\
\>           \>    $h = h.\next$ \\
\>           \>    \WHILE\ $w\not= \NULL$:   \\
\>           \>    $\{$  \>     $u = w$ \\
\>           \>          \>     $w = w.\next$ \\
\>           \>          \>     \IF\ $u.\ITEM = \NULL$: \\
\>           \>          \>        $\{$   \>  \IF\ $u.ep=\NULL$: \\
\>           \>          \>               \>  $\{$   \>  $u.\next = h$   \\
\>           \>          \>               \>         \>  $h = u$  $\;\}$  $\;\}$   \\
\>           \>          \>               \>  \ELSE:   \\
\>           \>          \>               \>  $\{$ \> \IF\ $u.ep=v$: $w=\NULL$    \\
\>           \>          \>               \>       \> \ELSE: $u.next = null$   \\
\>           \>          \>               \>       \>  $u.ep = null$  $\}$   \\
\>           \>          \>     \ELSE:  \\
\>           \>          \>               \>       $\dorankedlinks(u)$ \\
\>           \>     destroy $v$ $\;\}$ \\
\>      $\dounrankedlinks()$\\
\>      \RETURN\ $h$    $\}$
\end{mytabbing}
}
\end{center}
\caption{Implementation of \hdelete\ in two-parent hollow heaps. Rank updates during ranked links are done in the auxiliary method \dorankedlinks\ in Figure~\ref{fig:delete-auxiliary}.
\label{fig:tricky-lazy}}
\end{figure}

\newcommand{\DORANKEDLINKS}{
\parbox[t]{2.5in}{
\begin{mytabbing}
aaa\=aaa\=aaa \kill
$\dorankedlinks(u)$: \\
$\{$   \>  \WHILE\ $A[u.\rank]\not= \NULL$: \\
       \>  $\{$  \>  $u = \link(u, A[u.\rank])$  \\
       \>        \>  $A[u.\rank] = \NULL$  \\
       \>        \>  $u.\rank = u.\rank + 1$     $\;\}$ \\
       \>  $A[u.\rank] = u$   \\
       \>  \IF\ $u.\rank > \maxrank$: \\
       \>        \>  $\maxrank = u.\rank$  $\;\}$  $\;\}$
\end{mytabbing}
}}

\newcommand{\DOUNRANKEDLINKS}{
\parbox[t]{2.5in}{
\begin{mytabbing}
aaa\=aaa\=aaa \kill
$\dounrankedlinks()$: \\
\FOR\  $i = 0$ to $\maxrank$:  \\
$\{$ \>    \IF\ $A[i]\not= \NULL$: \\
     \>    $\{$  \>   \IF\ $h = \NULL$: $h = A[i]$ \\
     \>          \>   \ELSE: $h = \link(h, A[i])$ \\
     \>          \>   $A[i] = \NULL$    $\;\}$     $\;\}$
\end{mytabbing}
}}

\begin{figure}[htbp]
\begin{center}
\parbox[t]{2.3in}{
\DORANKEDLINKS
}
\parbox[t]{2.3in}{
\DOUNRANKEDLINKS
}
\end{center}
\caption{Implementations of auxiliary methods used in $\hdelete$.}
\label{fig:delete-auxiliary}
\end{figure}

When a \hdelete\ makes a root hollow, do the following.  First, initialize $L$ to contain the hollow root and $A$ to be empty.  Second, repeat the following until $L$ is empty: Delete a node~$v$ from $L$, apply the appropriate one of the following cases to each child~$u$ of~$v$, and then destroy~$v$:


\begin{itemize}
\item[(a)]  $u$ is hollow and~$v$ is its only parent: Add~$u$ to $L$: deletion of~$v$ makes~$u$ a root.
\item[(b)] $u$ has two parents and~$v$ is the second: Set $u.ep = \NULL$ and stop processing children of~$v$: $u$ is the last child of~$v$.
\item[(c)]    $u$ has two parents and~$v$ is the first: Set $u.ep = \NULL$ and $u.next = \NULL$.
\item[(d)]   $u$ is full: Add~$u$ to $A$ unless $A$ contains a root of the same rank. If it does, link~$u$ with this root via a ranked link and repeat this with the winner until $A$ does not contain a root of the same rank; then add the final winner to $A$.
\end{itemize}

Third and finally (once $L$ is empty), empty $A$ and link full roots via unranked links until there is at most one.

Figure~\ref{fig:tricky-lazy} gives pseudocode that implements \hdelete.  Since cases (ii) and (iii) both set $u.ep = \NULL$, this assignment is factored out of these cases. Figure~\ref{fig:delete-auxiliary} gives auxiliary methods used by \hdelete\ to do links.
Array $A$ is a global variable, assumed to be initialized to empty. Integer $\maxrank$ is also a global variable.

With this implementation, the worst-case time per operation is $O(1)$ except for \hdelete\ operations that remove root items.  A \hdelete\ that removes a root item takes $O(1)$ time plus $O(1)$ time per hollow node that loses a parent plus $O(1)$ time per link plus $O(\log_\phi N)$ time, where $N$ is the number of nodes in the dag just before the \hdelete, since $\maxrank = O(\log_\phi N)$ by Corollary~\ref{cor:lazy-rank}. These are the bounds needed to give Theorem~\ref{thm:lazy-analysis}.

We can reduce the number of pointers per node in this implementation from four to three by using the same field of a node~$u$ to hold $u.item$ and $u.ep$, since $u.ep$ is $\NULL$ if~$u$ is full and $u.item$ is $\NULL$ if~$u$ is hollow.  This requires adding a bit per node to indicate whether the node is full or hollow, trading a bit per node for a pointer per node.  We can avoid the extra bit per node by using the rank field to indicate hollow nodes, for example by setting the rank of a hollow node to 0 and that of a full node to its actual rank plus one.  This variant has the disadvantage that the shared field for items and extra parents must be able to store pointers to two different types of objects.

An alternative trades three bits per node for one pointer per node but does not require fields storing pointers of different types: Eliminate $ep$ pointers.  Instead, store with each node~$u$ three Boolean variables; $u.new$, true if and only if~$u$ was created by a \decreasekey\ and has only one child (the node whose item was moved to~$u$); $u.penult$ (penultimate), true if and only if~$u$ is the next-to-last child of its first parent; and $u.two$, true if and only if~$u$ has two parents.  Answer questions (i) and (ii) as follows: (i) $u$ is last on the list of children of~$v$ if $v.new$ or $x.penult$, where~$x$ is the child before~$u$ on the list of children of~$v$ (visited just before~$u$ during the traversal of the list); and (ii) $u$ has two parents if and only if $u.two$.  Question (iii) does not need to be answered: during a delete, when processing each child~$u$ of a hollow root~$v$, there are only three cases, not four:

\begin{itemize}
\item[(a')] Same as Case (a).
\item[(b')] Combines Cases (b) and (c): $u$ has two parents: Set $u.two = false$.
\item[(c')] Same as Case (d).
\end{itemize}

Another alternative eliminates~$ep$ pointers in a different way: delete the~$ep$ pointers, store a Boolean variable $u.two$ with each node~$u$ that is true if and only if~$u$ has two parents, and modify the item field for a hollow node~$u$ to point to the item~$u$ once held.  This alternative requires that each deleted item~$e$ has $e.node = \NULL$; otherwise, hollow nodes can have dangling item pointers.  With this method, a node~$u$ is full if and only if $u.item.node = u$, a child~$u$ of~$v$ is last on the list of children of~$v$ if and only if $u.next = \NULL$ or $u.item = v.item$, and~$v$ is the second parent of~$u$ if and only if $u.item = v.item$.  Since item pointers replace~$ep$ pointers, an item cannot be destroyed until all hollow nodes previously containing it have been destroyed.  This alternative is especially appealing if keys and ranks are stored with items instead of nodes.  Then one can do a \decreasekey\ operation by accessing only the item whose key decreases, not the node~$u$ containing it.  In particular, creating a new node~$v$ and setting $e.item = v$ automatically makes u hollow since then $u.item.node = v$.

Which alternative is best in practice is an experimental question and is likely to depend on the sequence of operations as well as the details of the computer and programming language.

\section{Good and Bad Variants}\label{sec:good-bad}

In this section we explore the design space of hollow heaps.  We show that our data structures occupy ``sweet spots'' in the design space: although small changes to these  structures preserve their efficiency, larger changes destroy it.
We explore variants of the one-root structures of Sections~\ref{sec:one-root} and~\ref{sec:lazy}; our results extend to the analogous multi-root
structures as well.
We consider three classes of structures: \lazy-$k$, \eager-$k$, and \naive-$k$.  Here~$k$ is an integer function specifying the rank of the new node~$v$ in a \decreasekey\ operation as a function of the rank~$r$ of the node~$u$ made hollow by the operation. Data structure \lazy-$k$ is the data structure of Section~\ref{sec:lazy}, except that it sets the rank of~$v$ in \decreasekey\ to be $\max\{k, 0\}$.  Thus \lazy-$(r - 2)$ is exactly the data structure of Section~\ref{sec:lazy}.  Data structure \eager-$k$ is the data structure of Section~\ref{sec:one-root}, except that it sets the rank of~$v$ in \decreasekey\ to be $\max\{k, 0\}$,
and, if $r > k$, it moves to~$v$ all but the~$r-k$ highest-ranked ranked children of~$u$, as well as the unranked children of~$u$.
Thus \eager-$(r - 2)$ is exactly the data structure of Section~\ref{sec:one-root}.  Finally, \naive-$k$ is the data structure of Section~\ref{sec:lazy}, except that it sets the rank of~$v$ in \decreasekey\ to be $\max\{k, 0\}$ and it never assigns second parents: when a hollow node~$u$ becomes a root, $u$ is deleted and all its children become roots.  We consider two regimes for~$k$: \emph{large}, in which $k = r - j$ for some fixed non-negative integer~$j$; and \emph{small}, in which $k = r - f(r)$, where $f(r)$ is a positive non-decreasing integer function that tends to infinity as $r$ tends to infinity.

We begin with a positive result: for any fixed integer $j \ge 2$, both \lazy-$(r - j)$ and \eager-$(r - j)$ have the efficiency of Fibonacci heaps.  It is straightforward to prove this by adapting the analysis in Sections~\ref{sec:one-root} and~\ref{sec:lazy}.  As~$j$ increases, the rank bound (Corollaries~\ref{cor:eager-rank} and~\ref{cor:lazy-rank}) decreases by a constant factor, approaching $\lg N$ or $\lg n$, respectively, as~$j$ grows, where $\lg$ is the base-2 logarithm.  The trade-off is that the amortized time bound for \decreasekey\ is $O(j + 1)$, increasing linearly with~$j$.

All other variants are inefficient.  Specifically, if the amortized time per \deletemin\ is $O(\log m)$, where $m$ is the total number of operations, and the amortized time per \makeheap\ and \hinsert\ is  $O(1)$, then the amortized time per \decreasekey\ is $\omega(1)$.  We demonstrate this by constructing costly sequences of operations for each variant.  We content ourselves merely with showing that the amortized time per \decreasekey\ is $\omega(1)$; for at least some variants, there are asymptotically worse sequences than ours. Our results are summarized in the following theorem. The individual constructions appear in Sections \ref{sec:eager-k}, \ref{sec:r_and_r-1}, and \ref{sec:lazy-k}.

\begin{theorem}\label{thm:7.1}
Variants \lazy-$(r - j)$ and \eager-$(r - j)$ are efficient for any choice of $j > 1$ fixed independent of $r$.
All other variants, namely \naive-$k$
for all~$k$, \eager-$r$, \lazy-$r$, \eager-$(r - 1)$, \lazy-$(r - 1)$, and  \eager-$k$ and \lazy-$k$ for~$k$ in the small regime
are inefficient.
\end{theorem}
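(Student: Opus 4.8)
The plan is to prove the two halves separately. For the positive claim that \lazy-$(r-j)$ and \eager-$(r-j)$ are efficient for every fixed $j \ge 2$, I would adapt the analyses of Sections~\ref{sec:one-root} and~\ref{sec:lazy}. First I would restate the rank invariant so that a node of rank $r$ made hollow by a \decreasekey\ retains its $j$ highest-ranked (virtual) children, of ranks $r-1, r-2, \ldots, r-j$, rather than exactly two. Repeating the induction of Lemmas~\ref{lem:multi-fib} and~\ref{lem:lazy-fib} then gives a descendant bound governed by the generalized Fibonacci recurrence $L(r) = 1 + \sum_{i=1}^{j} L(r-i)$, whose characteristic equation $x^{j+1} - 2x^j + 1 = 0$ factors as $(x-1)(x^j - x^{j-1} - \cdots - 1)$ and thus has a dominant root $\phi_j \in (1,2)$ for every $j \ge 2$ (with $\phi_2 = \phi$ and $\phi_j \to 2$ as $j \to \infty$). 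Hence a node of rank $r$ has $\Omega(\phi_j^{\,r})$ descendants and the rank bound of Corollaries~\ref{cor:eager-rank} and~\ref{cor:lazy-rank} becomes $\log_{\phi_j} N = O(\log N)$; this is precisely where $j \ge 2$ is needed, since at $j = 1$ the recurrence is linear and at $j = 0$ constant. The amortized analysis then carries over from Theorems~\ref{thm:eager-analysis} and~\ref{thm:lazy-analysis} with the same potentials: a ranked link still has amortized cost zero, making a node hollow still raises the potential by at most one per (virtual) child of distinct rank, i.e.\ $O(\log N)$, and a \decreasekey\ now raises it by at most $j$ retained children plus $O(1)$, giving amortized \decreasekey\ cost $O(j+1)$ and amortized \hdelete/\deletemin\ cost $O(\log N)$; for \eager\ one also checks that splitting off all but the top $j$ children costs $O(j)$ worst case.

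For the negative half I would first fix the accounting that converts each explicit construction into an $\omega(1)$ lower bound. Running a sequence of $m$ operations from the empty structure, the total actual work is at most the sum of the amortized costs; under the theorem's hypotheses this is $O(1)$ per \hinsert\ and \makeheap, $O(\log m)$ per \deletemin, and $a\cdot(\#\decreasekey)$ if $a$ is the supposed amortized \decreasekey\ cost. Hence $a$ is at least the total actual work minus the cheap insert and delete-min budgets, divided by the number of \decreasekey\ operations. It therefore suffices, for each inefficient variant, to exhibit a family of sequences in which the forced work — either a single \deletemin\ whose consolidation must perform $\omega(\log m)$ links, or the intrinsic cost of the \decreasekey\ operations themselves — exceeds the cheap budgets by a factor tending to infinity while the number of \decreasekey\ operations stays correspondingly small. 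These constructions are carried out in Sections~\ref{sec:eager-k}, \ref{sec:r_and_r-1}, and~\ref{sec:lazy-k}.

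The common engine behind all six inefficient families is the destruction of the exponential balance the positive half relies on. When $j \le 1$ (variants \eager-$r$, \lazy-$r$, \eager-$(r-1)$, \lazy-$(r-1)$), the descendant recurrence degrades to constant or linear, so the adversary can manufacture a node of rank $\rho = \omega(\log m)$ carrying few descendants; deleting its item exposes $\rho$ children that one \deletemin\ must consolidate, and the schedule is arranged so that only $o(\rho)$ \decreasekey\ operations built it (Section~\ref{sec:r_and_r-1}). In the small regime $k = r - f(r)$ the rank stays $O(\log m)$, but each \decreasekey\ at a high-rank node itself pays $\Theta(f(r))$ to retain or split off $f(r)$ children, and reusing one built structure through a cascade of \decreasekey\ operations of slowly decreasing rank drives the \emph{average} \decreasekey\ cost to $\Theta(f(\cdot)) = \omega(1)$ (Sections~\ref{sec:eager-k} and~\ref{sec:lazy-k}). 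For \naive-$k$ the defect is that a \decreasekey\ never cleans up: the hollowed node keeps all its children and surfaces them en masse only when it later becomes a root, so the adversary accumulates many full children cheaply and forces a single \deletemin\ to consolidate them all.

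I expect the main obstacle to be the negative half, and specifically the per-variant tuning. For each of the six families one must design an explicit schedule of \hinsert, \decreasekey, and \deletemin\ operations that is actually realizable under the variant's precise linking and rank rules, produces the expensive configuration, and keeps the number of \decreasekey\ operations small enough that the forced excess work — after subtracting the $O(\log m)$ per-\deletemin\ and $O(1)$ per-\hinsert\ budgets — still diverges when divided among them. The bookkeeping is delicate because building a high-rank node naively costs exponentially many inserts, which can swamp the \decreasekey\ charge; the constructions must reuse a single expensive structure across many charged \decreasekey\ operations, which is where the variant-specific ingenuity lies. The positive half, by contrast, is routine once the generalized rank invariant and its generalized Fibonacci recurrence are in place.
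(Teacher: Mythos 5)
Your positive half matches the paper's own (deliberately terse) treatment: the paper simply asserts that the analyses of Sections~\ref{sec:one-root} and~\ref{sec:lazy} adapt, and your generalized rank invariant with the recurrence $L(r)=1+\sum_{i=1}^{j}L(r-i)$, dominant root $\phi_j\in(1,2)$, and the observation that $j=1$ degenerates to linear growth is exactly the right way to make that assertion precise, as is the $O(j+1)$ bound for \decreasekey.

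The genuine gap is in the negative half: your accounting frame (total work is bounded by the sum of the amortized budgets, so any excess must be charged to \decreasekey) is correct and is implicitly what the paper uses, but you have not produced any of the six adversarial operation sequences --- you explicitly defer them as ``the per-variant tuning\ldots where the variant-specific ingenuity lies.'' Those constructions \emph{are} the proof. What is missing, concretely: (i) for \eager-$k$ in the small regime and \naive-$k$, a round consisting of $\ell$ \decreasekey\ operations on the children of the root of a binomial tree $B_\ell$, one \hinsert, and one \deletemin\ that exposes $\sum_{i<\ell}f(i)=\Omega(\ell f(\ell/2))$ roots yet, via ranked links, reconstitutes $B_\ell$ exactly so the round can be repeated $2^\ell$ times; (ii) for the $r$ and $(r-1)$ variants, a \emph{self-regenerating} tree $T_\ell$ whose root has full children of ranks $0,\dots,\ell-1$ and all of whose other nodes are hollow, buildable in $O(\ell^2)$ (resp.\ $O(\ell^3)$, via the intermediate trees $T_\ell(i)$) operations and then milked by polynomially many \hinsert+\deletemin\ pairs each costing $\Omega(\ell)$ with no further \decreasekey\ at all; (iii) for \lazy-$k$ in the small regime a genuinely different construction (the trees $S_\ell$ and $R_\ell$), because the second-parent mechanism delays the surfacing of a hollowed node's children until both parents die, so the \eager-$k$ schedule does not transfer --- your sketch lumps the two small-regime cases together under one mechanism, which would not go through. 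Finally, your description of the $j\le 1$ case as ``one \deletemin'' consolidating $\rho=\omega(\log m)$ children cannot stand alone: assembling any high-rank configuration costs many \hinsert s whose $O(1)$ budgets would absorb a single expensive consolidation, so repeatability of the expensive round is not an implementation detail but the crux; you acknowledge this at the end without supplying it for any variant.
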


\subsection{\eager-$k$ for $k$ in the small regime and \naive-$k$ for all $k$}\label{sec:eager-k}

We first consider \eager-$k$ for $k$ in the small regime, i.e., $k = r - f(r)$ where $f$ is a positive non-decreasing function that tends to infinity.
We obtain an expensive sequence of operations as follows.  We define the \emph{binomial tree}~$B_\ell$ \cite{Brown78,Vuillemin78} inductively: $B_0$ is a one-node tree; $B_{\ell+1}$ is formed by linking the roots of two copies of~$B_\ell$.  Tree~$B_\ell$ consists of a root whose children are the roots of copies of $B_0, B_1,\ldots, B_{\ell - 1}$ \cite{Brown78,Vuillemin78}.  For any~$\ell$, build a~$B_\ell$ by beginning with an empty tree and doing $2^\ell+1$ insertions of items in increasing order by key followed by one \deletemin.  After the insertions, the tree will consist of a root with $2^\ell$ children of rank~$0$.  In the \deletemin, all the links will be ranked, and they will produce a copy of~$B_\ell$ in which each node that is the root of a copy of $B_j$ has rank~$j$. The tree $B_\ell$ is shown at the top of Figure~\ref{fig:eager-k}.

\begin{figure}[t]
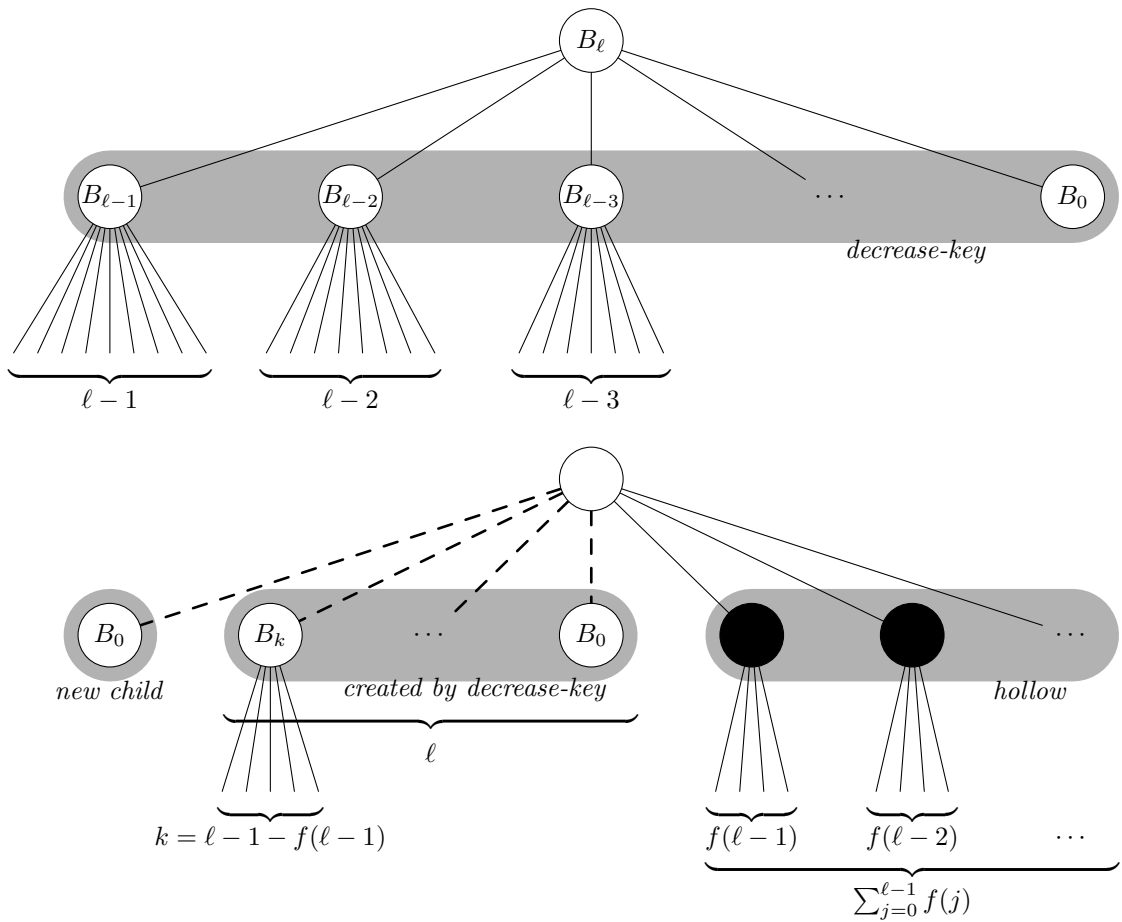

\begin{center}
\includegraphics[scale=1]{figures2.1}

\vspace{0.4cm}

\includegraphics[scale=1]{figures2.2}
\end{center}
\caption{
The construction for \eager-$k$. Roots of binomial trees are labeled, and black nodes are hollow. Solid and dashed lines denote ranked and unranked links, respectively. (Top) The initial configuration - a binomial tree $B_\ell$. The shaded region shows nodes on whose items \decreasekey\ operations are performed. (Bottom) The heap after performing \decreasekey\ operations and inserting a new child. The keys of the items in the newly hollow nodes were decreased, resulting in the middle nodes being linked with the root. The number of children of each node is shown at the bottom.
}\label{fig:eager-k}
\end{figure}

Now repeat the following $\ell+2$ operations $2^\ell$ times: do~$\ell$ \decreasekey\ operations on the items in the children of the root of~$B_\ell$, making the new keys greater than that of the key of the item in the root. This makes the $\ell$ previous children of the root hollow, and gives the root $\ell$ new children. Insert a new item whose key is greater than that of the item in the root.  Finally, do a \deletemin.  The \deletemin\ deletes the root and its~$\ell$ hollow children, leaving the children of the hollow nodes to be linked. Since a hollow node of rank $r$ has $f(r)$ children, the total number of nodes linked after the \deletemin\ is $1+\ell+\sum_{j=0}^{\ell-1} f(j) > (\ell/2)f(\ell/2)$. \footnote{Assume for simplicity that $\ell$ is even.} Each node involved in linking is the root of a binomial tree. Since the total number of nodes remains $2^\ell$, the binomial trees are linked using only ranked links to form a new copy of~$B_\ell$, and the process is then repeated.

After~$B_\ell$ is formed, each round of $\ell + 2$ consecutive subsequent operations contains only one \deletemin\ but takes $\Theta(\ell f(\ell/2))$ time. The total number of operations is $m=O(\ell2^\ell)$, of which $2^\ell + 1$ are \deletemin\ operations. The total time for the operations is $\Theta(\ell 2^\ell f(\ell/2))=\Theta(m f(\ell/2))$, but the desired time is $O(\ell2^\ell)=O(m)$. In particular, if the amortized time per \deletemin\ is $O(\ell)$ and the amortized time per \makeheap\ and \hinsert\ is $O(1)$, then the amortized time per decrease-key is $\Omega(f(\ell/2))$, which tends to infinity with~$\ell$.

We next consider \naive-$k$. Note that \naive-$0$ is identical to \eager-$0$, so the two methods do exactly the same thing for the example described above. An extension of the construction shows that \naive-$k$ is inefficient for every value of~$k$, provided that we let the adversary choose which ranked link to do when more than one is possible. Method \naive-$k$ is identical to \naive-$0$ except that nodes created by $\decreasekey$ may not have rank 0. The construction for \naive-$k$ deals with this issue by inserting new nodes with rank 0 that serve the function of nodes created by $\decreasekey$ for \naive-$0$. The additional nodes with non-zero rank are linked so that they do not affect the construction.

We build an initial~$B_\ell$ as before.  Then we do~$\ell$ \decreasekey\ operations  on the items in the children of the root, followed by $\ell + 1$ \hinsert\ operations of items with keys greater than that of the root, followed by one \deletemin\ operation, and repeat these operations $2^\ell$ times.  When doing the linking during the \deletemin, the adversary preferentially links newly inserted nodes and grandchildren of the deleted root, avoiding links involving the new nodes created by the \decreasekey\ operations until these are the only choices.  Furthermore, it chooses keys for the newly inserted items so that one of them is the new minimum.  Then the tree resulting from all the links will be a copy of~$B_\ell$ with one or more additional children of the root, whose descendants are the nodes created by the \decreasekey\ operations.  After the construction of the initial~$B_\ell$, each round of $2\ell+2$ subsequent operations maintains the invariant that the tree consists of a copy of~$B_\ell$ with additional children of its root, whose descendants are all the nodes added by \decreasekey\ operations.

The analysis is the same as for \eager-$0$, i.e. for the case  $f(r) = r$. The total number of operations is $m=O(\ell2^\ell)$, and the desired time is $O(\ell2^\ell)=O(m)$. The total time for the operations is however $\Theta(\ell^22^\ell)=\Theta(m\ell)$. Thus, the construction shows that \naive-$k$ for any value of~$k$ takes at least logarithmic amortized time per \decreasekey.



\subsection{\lazy-$r$, \eager-$r$, \lazy-$(r - 1)$, and \eager-$(r - 1)$}\label{sec:r_and_r-1}

Next we consider \lazy-$r$, \eager-$r$, \lazy-$(r - 1)$, and \eager-$(r - 1)$.  To get a bad example for each of these methods, we construct a tree~$T_\ell$ with a full root, having full children of ranks $0,1,\ldots, \ell - 1$, and in which all other nodes, if any, are hollow.  Then we  repeatedly do an \hinsert\ followed by a \deletemin, each repetition taking $\Omega(\ell)$ time.

In these constructions, all the \decreasekey\ operations are on nodes having only hollow descendants, so the operations maintain the invariant that every hollow node has only hollow descendants.  If this is true, the only effect of manipulating hollow nodes is to increase the cost of the operations, so we can ignore hollow nodes; or, equivalently, regard them as being deleted as soon as they are created.  Furthermore, with this restriction \lazy-$k$ and \eager-$k$ have the same behavior, so one bad example suffices for both \lazy-$r$ and \eager-$r$, and one for \lazy-$(r - 1)$ and \eager-$(r - 1)$.

Consider \lazy-$r$ and \eager-$r$.  Given a copy of~$T_\ell$ in which the root has rank~$\ell$, we can build a copy of $T_{\ell + 1}$ in which the root has rank $\ell + 1$ as follows:  First, insert an item whose key is less than that of the root, such that the new node becomes the root.  Second, do a \decreasekey\ on each item in a full child of the old root (a full grandchild of the new root), making each new key greater than that of the new root.  Third, insert an item whose key is greater than that of the new root. Finally, do a \deletemin.  Just before the \deletemin, the new root has one full child of each rank from $1$ to~$\ell$, inclusive, and two full children of rank~$0$.  In particular one of these children is the old root, which has rank $\ell$. The \deletemin\ produces a copy of $T_{\ell + 1}$.  (The \decreasekey\ operations produce hollow nodes, but no full node is a descendant of a hollow node.)  It follows by induction that one can build a copy of~$T_\ell$ for an  arbitrary value of~$\ell$ in $O(\ell^2)$ operations.  These operations followed by~$\ell^2$ repetitions of an \hinsert\ followed by a \deletemin\  form a sequence of $m=O(\ell^2)$ operations that take $\Omega(\ell^3)=\Omega(m^{3/2})$ time.

\begin{figure}[t]
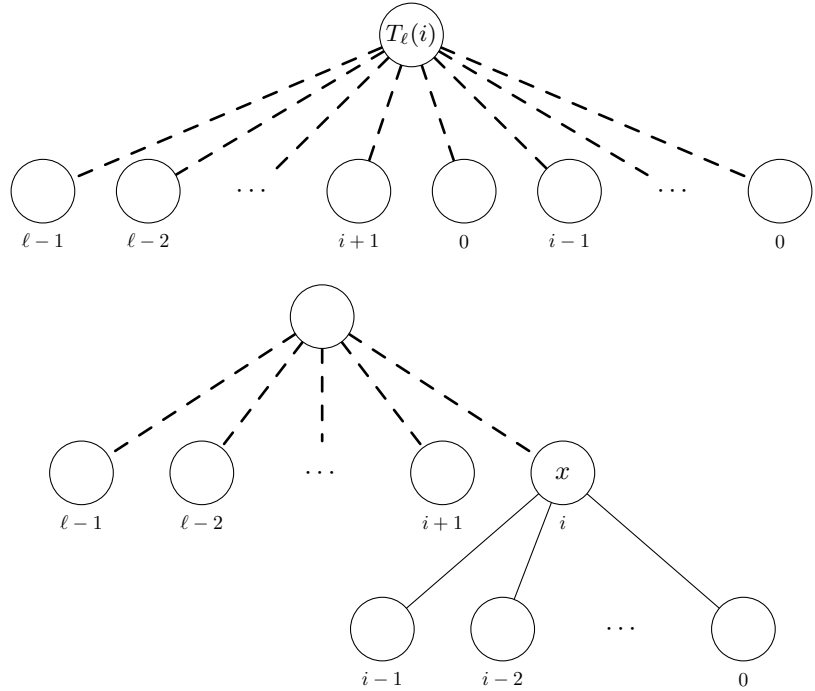

\begin{center}
\includegraphics[scale=1]{figures2.14}

\vspace{0.4cm}

\includegraphics[scale=1]{figures2.15}
\end{center}
\caption{
The construction for \lazy-$(r-1)$ and \eager-$(r-1)$. Only full nodes are shown. Solid and dashed lines denote ranked and unranked links, respectively. Ranks are shown beneath nodes. (Top) The tree $T_\ell(i)$. (Bottom) The tree obtained from $T_\ell(i)$ by inserting an item and performing a $\deletemin$ operation.
}\label{fig:eager-(r-1)}
\end{figure}



A similar but more elaborate example is bad for \lazy-$(r - 1)$ and \eager-$(r - 1)$.  Let~$T_\ell(i)$ be~$T_\ell$ with the child of rank~$i$ replaced by a child of rank~$0$.  In particular, $T_\ell(0)$ is~$T_\ell$, and $T_{\ell + 1}(\ell + 1)$ is~$T_\ell$ with the root having a second child of rank~$0$. $T_\ell(i)$ is shown at the top of Figure~\ref{fig:eager-(r-1)}.

Given a copy of $T_\ell(i)$ with $i > 0$, we can build a copy of $T_\ell(i - 1)$ as follows: First, insert an item whose key is greater than that of the root but less than that of all other items.  Now the root has three children of rank~$0$.  Second, do a \deletemin.  The just-inserted node will become the root, the other children of the old root having rank less than~$i$ will be linked by ranked links to form a tree whose root~$x$ has rank~$i$ and is a child of the new root, and the remaining children of the old root will become children of the new root.  Node~$x$ has exactly one full proper descendant of each rank from $0$ to $i - 1$, inclusive. The tree obtained after performing the \deletemin\ operation is shown at the bottom of Figure~\ref{fig:eager-(r-1)}. (In the figure we assume that the key of the child of the old root of rank $j < i$ is smaller than the key of the child of the old root of rank $j-1$ for every $1\le j  < i$. In this case~$x$ is the child of rank $i-1$ of the old root and its children after the \deletemin\ are the children of the old root of rank  $\le i-2$. But
unlike the situation shown in the figure, the descendants of~$x$ can in general be linked arbitrarily.) Finally, do a \decreasekey\ on each of the items in the full proper descendants of~$x$ in a bottom-up order (so that each \decreasekey\ is on an item in a node with only hollow descendants), making each new key greater than that of the root. The rank of each new node created this way is 1 smaller than the rank of the node it came from, except for the node that already has rank 0. The root thus gets two new children of rank 0 and one new child of each rank from 1 to $i-2$. The result is a copy of $T_\ell(i - 1)$, with some extra hollow nodes, which we ignore.  We can convert a copy of $T_\ell(0) = T_\ell$ into a copy of  $T_{\ell + 1}(\ell + 1)$ by inserting a new item with key greater than that of the root.  It follows by induction that one can build a copy of~$T_\ell$ in $m=O(\ell^3)$ operations.  These operations followed by $\ell^3$ repetitions of an \hinsert\ followed by a \deletemin\ take a total of $\Omega(\ell^4)=\Omega(m^{4/3})$ time but the desired time is $O(\ell^3\log \ell)=O(m\log m)$.

\subsection{\lazy-$k$}\label{sec:lazy-k}

Finally, we consider \lazy-$k$ for any~$k$ in the small regime. We again construct a tree for which we can repeat an expensive sequence of operations. We first give a construction for \lazy-$0$ and then show how to generalize the construction to all choices of $k$ in the small regime.

Define the tree $S_\ell$ inductively as follows. Tree $S_0$ is a single node. For $\ell> 0$,  $S_\ell$ is a tree with a full root of rank $\ell$, having one hollow child that is the root of $S_{\ell-1}$ and having full children of ranks $0,1,\ldots, \ell - 1$, with the $i$-th full child being the root of a copy of~$B_i$. The tree $S_\ell$ is shown at the top of Figure~\ref{fig:S_n}. Let~$R_\ell$ be a tree obtained by linking copies of $S_0,S_1,\ldots, S_{\ell - 1}$ to~$S_\ell$, with the root of~$S_\ell$ winning every link. The tree $R_\ell$ is shown at the bottom of Figure~\ref{fig:S_n}. We show how to build a copy of~$R_\ell$ for any~$\ell$.  Then we show how to do an expensive sequence of operations that starts with a copy of~$R_\ell$ and produces a new one.  By building one~$R_\ell$ and then doing enough repetitions of the expensive sequence of operations, we get a bad example.

\begin{figure}[t]
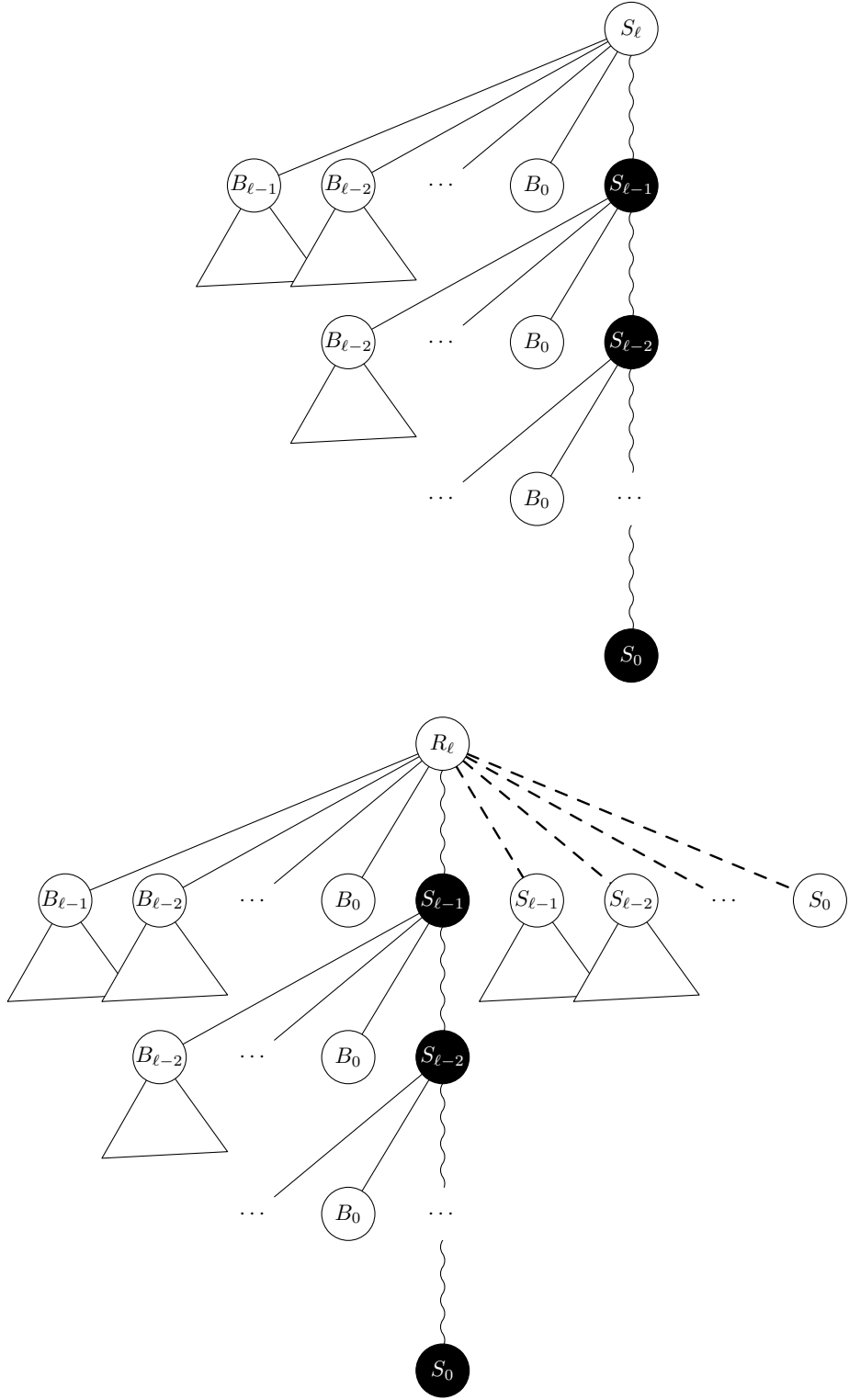

\begin{center}
\includegraphics[scale=0.91]{figures2.5}

\vspace{0.4cm}

\includegraphics[scale=0.91]{figures2.6}
\end{center}
\caption{
The trees $S_\ell$ (top) and $R_\ell$ (bottom). Every node is labeled by the type of its subtree. The triangles denote such subtrees. Black nodes are hollow. Solid and dashed lines denote ranked and unranked links, respectively. Squiggly lines denote second parents.
}\label{fig:S_n}
\end{figure}

To build a copy of~$R_\ell$ for arbitrary~$\ell$, we build a related tree~$Q_\ell$ that consists of a root whose children are the roots of copies of $S_0, S_1,\ldots, S_\ell$, with the root of~$S_\ell$ having the smallest key among the children of the root of~$Q_\ell$.  We obtain~$R_\ell$ from~$Q_\ell$ by doing a \deletemin.

We build $Q_0, Q_1,\ldots, Q_\ell$ in succession.  Tree~$Q_0$ is just a node with one full child of rank $0$, obtainable by a \makeheap\ and two \hinsert\ operations.  Given~$Q_j$, we obtain~$Q_{j + 1}$ by a variant of the construction for \eager-0. Let $x_i$ be the root of the existing copy of $S_i$ for $i = 0, \ldots, j$. In the following, all new keys are greater than the key of the root, so that the root remains the same throughout the sequence of operations. First we do \decreasekey\ operations on the roots $x_0,x_1,\ldots,x_j$ of the existing copies of $S_0, S_1,\ldots, S_j$. For $i = 0, \ldots, j$, the node $x_i$ is thus made hollow and becomes a child of a new node $y_i$ of rank 0. Note that a copy of $S_{i+1}$ can be obtained from repeated, ranked linking of $y_i$ and $2^{i+1}-1$ nodes of rank 0 where $y_i$ wins every link in which it participate. We next do enough \hinsert\ operations to provide the nodes to build $S_1,S_2,\ldots,S_{j+1}$ in this way. The total number of nodes needed is $\sum_{i=0}^j (2^{i+1}-1)$. Finally, we do two additional \hinsert\ operations, followed by a \deletemin. The two extra nodes are for a copy of $S_0$ and for a new root when the old root is deleted.

Deletion of hollow roots by \deletemin\ makes $y_i$ the only parent of $x_i$ for all $i = 0, \ldots, j$. We are left with a collection of $1+\sum_{i=0}^{j+1} 2^{i}$ roots of rank 0. We do ranked links to build the needed copies of $S_0, S_1,\ldots, S_{j+1}$ in decreasing order. Finally, we link the new root with each of the roots of the new copies of $S_i$.



Suppose we are given a copy of~$R_\ell$.  Let~$x_j$, for $j = 0, 1,\ldots, \ell$, be the root of the copy of $S_\ell$.  In particular, $x_\ell$ is the root of~$R_\ell$. We can do an expensive sequence of operations that produces a new copy of~$R_\ell$ as follows: Do \decreasekey\ operations on~$x_j$ for $j = 0, 1,\ldots, \ell - 1$, giving each~$x_j$ a second parent $y_j$.  Make all the new keys larger than that of $x_\ell$ and smaller than those of all children of $x_\ell$; among them, make the key of $y_{\ell - 1}$ the smallest.  Next, insert a new item with key greater than that of $y_{\ell - 1}$; let $z$ be the new node holding the new item. Figure~\ref{fig:lazy-0_construction} shows the resulting situation. Next, do a \deletemin.  This makes $y_j$ the only parent of $x_j$ for $j = 0,\ldots, \ell - 1$. Once the hollow roots are deleted, the remaining roots are $z$, the $y_j$, and the roots of $\ell - i$ copies of $B_i$ for $i = 0, 1,\ldots , \ell - 1$. Finish the \deletemin\ by doing ranked links of each $y_j$ with the roots of copies of $B_i$ for $i = 0,1,\ldots, j$, forming new copies of  $S_0,S_1,\ldots, S_\ell$ ($z$ is the root of a copy of $S_0$; $y_j$ is the root of a copy of $S_{j + 1}$), and link the roots of these copies by unranked links.  The result is a new copy of~$R_\ell$.  The sequence of operations consists of one \hinsert, $\ell$ \decreasekey\ operations, and one \deletemin\ and takes  $O(\ell^2)$ time.

\begin{figure}[t]
\begin{center}
\includegraphics[scale=0.91]{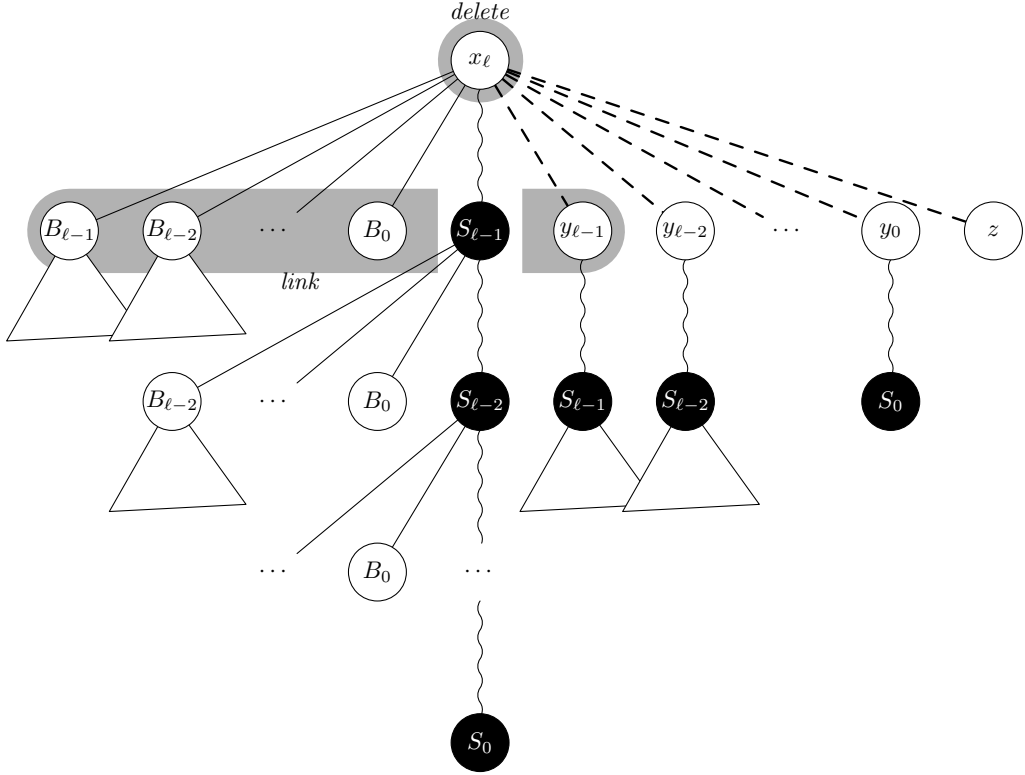}
\end{center}
\caption{
The tree obtained from $R_\ell$ by performing $\ell$ \decreasekey\ operations and one \hinsert. Every node is  labeled by its name or the type of its subtree. The triangles denote such subtrees. Black nodes are hollow. Solid and dashed lines denote ranked and unranked links, respectively. Squiggly lines denote second parents. Edges connecting $x_\ell$ to the children of $y_i$ for $i = 0, \ldots, \ell-1$  have been omitted.
}\label{fig:lazy-0_construction}
\end{figure}

The number of nodes in~$R_\ell$ is $O(2^\ell)$, as is the number of operations needed to build it and the time these operations take. Having built $R_\ell$,  if we then do $2^\ell$ repetitions of the expensive sequence of operations described above, the total number of operations is $m = O(\ell2^\ell)$.  The operations take  $\Theta(\ell^22^\ell) = \Theta(m\ell)$ time, whereas the desired time is $O(m)$.

An extension of the same construction shows the inefficiency of \lazy-$k$ for~$k$ in the small regime: instead of doing one \decreasekey\ on each appropriate item, we do enough to reduce to $0$ the rank of the full node holding the item.  Suppose $k = r - f(r)$, where $f(r)$ is a positive non-decreasing function tending to infinity.  Then the number of \decreasekey\ operations needed to reduce the rank of the node holding an item to $0$, given that the rank of the initial node holding the item is~$k$, is at most $k/f(\sqrt{k}) + \sqrt{k}$. It follows that the extended construction does at most $\ell^2/f(\sqrt{\ell}) + \ell^{3/2}$ \decreasekey\ operations per round, and the amortized time per decrease-key is $\Omega(f(\sqrt{\ell}))= \omega(1)$, assuming that the amortized time per \hdelete\ is $O(\ell)$ and that of \makeheap\ and \hinsert\ is $O(1)$.

\end{document}